\documentclass[a4paper,11pt]{article}

\usepackage{amsmath,amsthm}
\usepackage{amssymb,enumerate,url}
\usepackage{tikz}
\usepackage{authblk}

\newtheorem{lemma}{Lemma}
\newtheorem{corollary}{Corollary}
\newtheorem{theorem}{Theorem}
\newtheorem{example}{Example}

\newcommand{\D}{\mathcal{D}}
\newcommand{\G}{\mathcal{G}}
\newcommand{\N}{\mathbb{N}}
\newcommand{\T}{\mathcal{T}}
\newcommand{\cT}{\mathcal{C}}
\newcommand{\type}{\mathsf{type}}
\newcommand{\val}{\mathsf{val}}

\begin{document}

\title{Universal Tree Source Coding Using Grammar-Based Compression\footnote{This work has been supported by the DFG research project
LO 748/10-1 (QUANT-KOMP).}}

\author{Danny Hucke and Markus Lohrey}
\affil{Universit\"at Siegen, Germany}

\maketitle

\begin{abstract}
We apply so-called tree straight-line programs to the problem of universal source coding for binary trees.
We derive an upper bound on the  maximal pointwise redundancy (or worst-case redundancy) that improve
previous bounds on the average case redundancy obtained by Zhang, Yang, and Kieffer using directed acyclic graphs. Using this,
we obtain universal codes for new classes of  tree sources.
\end{abstract}

\section{Introduction}

Universal source coding for finite sequences over a finite alphabet $\Sigma$ (i.e., strings over $\Sigma$) is a well-established topic of information theory.
Its goal is to find prefix-free lossless codes that are universal (or optimal) for classes of information sources.
In a series of papers, Cosman, Kieffer, Nelson, and Yang developed grammar-based codes that are universal
for the class of finite state sources \cite{KiYa00,KiYa02,KiefferYNC00,YangK00}. Grammar-based compression works in two steps: In a first
step, from a given input string $w \in \Sigma^*$ a context-free grammar $\G_w$ that produces only the string $w$
is computed. Context-free grammars that produce exactly one string are also known as {\em straight-line 
programs}, briefly SLPs, and are currently an active topic in text compression and algorithmics on compressed texts,
see \cite{Loh12survey} for a survey. In a second step, the SLP $\G_w$ is encoded by a binary string $B(\G_w)$.  There exist several
algorithms that compute from a given input string $w$ of length $n$ an SLP $\G_w$ of size $O(n / \log n)$ (the size
of an SLP is the total number of symbols in all right-hand sides of the grammar) \cite{KiYa00}; the best known example
is probably the LZ78 algorithm \cite{ZiLe78}. 
By combining any of these algorithms with the binary encoder $B$ for SLPs from \cite{KiYa00}, one obtains a
grammar-based encoder  $E : \Sigma^* \to \{0,1\}^*$,
whose worst case redundancy for input strings of length $n$ is
bounded by $O(\log \log n / \log n)$ for every finite state information source over the alphabet $\Sigma$. 
Here, the worst case redundancy for strings of length $n$ 
is defined as
$$
\max_{w \in \Sigma^n, P(w)>0} n^{-1} \cdot (|E(w)| + \log_2 P(w)),
$$
where $P(w)$ is the probability that the finite state information source emits $w$.
Thus, the worst case redundancy measures the maximal additive deviation of the code length from the self information, normalized by the length of 
the source string.

Over the last few years, we have seen increasing efforts aiming to extend universal
source coding to structured data like trees \cite{KiefferYS09,MagnerTS16,ZhangYK14} and graphs \cite{ChoiS12,Kieffer16}. 
In this paper, we are concerned with universal source coding for trees.
In their recent paper \cite{ZhangYK14}, Kieffer, Yang, and Zhang started to extend their work on grammar-based source coding from
strings to binary trees. For this, they first represent the input tree $t$ by its {\em minimal directed
acyclic graph} $\D_t$ (the minimal DAG of $t$). This is the directed acyclic graph obtained by removing multiple occurrences
of the same subtree from $t$. In a second step, the minimal DAG $\D_t$ is encoded by a binary string $B(\D_t)$; this 
step is similar to the binary coding of SLPs from \cite{KiYa00}. Combining both steps yields a tree encoder $E_{\text{dag}} : \T 
\to \{0,1\}^*$, where $\T$ denotes the set of all binary trees. In order to define universality of such a tree encoder, 
the classical notion of an information source on finite sequences
is replaced in \cite{ZhangYK14} by the notion of  a structured tree source. The precise definition can be found in Section~\ref{sec-trees};
for the moment the reader can think about a collection of probability distributions $(P_n)_{n \in \N}$, where every $P_n$ is a distribution
on a finite non-empty subset $F_n$, and these sets partition $\T$.
The main cases considered in \cite{ZhangYK14} as well as in this paper 
are: (i) $F_n$ is the set of all binary trees with $n$ leaves 
(leaf-centric sources) and (ii) $F_n$ is the set of all binary trees of depth $n$ (depth-centric sources).
Then, the authors of \cite{ZhangYK14} introduce two properties on binary 
tree sources: (i) the domination property (see Section~\ref{sec-dag}, where it is called the weak domination property)
and (ii) the representation ratio negligibility property. The latter states that  
 $\sum_{t \in F_n} P_n(t) \cdot |\D_t|/|t|$  (the average compression ratio achieved by the minimal DAG)
converges to zero for $n \to \infty$, where the size $|t|$ of the binary tree is defined as its number of leaves. The technical main result of 
\cite{ZhangYK14} states that for every structured tree source $(P_n)_{n \in \N}$ satisfying the 
domination property and the representation ratio negligibility property the {\em average case redundancy}
$$
\sum_{t \in F_n, P_n(t)>0} |t|^{-1} \cdot (|E_{\text{dag}}(t)| + \log_2 P_n(t)) \cdot P_n(t)
$$
converges to zero for $n \to \infty$. Finally, two classes of tree sources having the domination 
property and the representation ratio negligibility property are presented in \cite{ZhangYK14}.
One is a class of leaf centric sources, the other one is a class of depth centric sources.
Both sources have the property that every tree with a non-zero probability is balanced in a certain sense, the precise definitions can be found in Section~\ref{sec-leaf-centric} and Section~\ref{sec-depth-centric}.
As a first contribution, we show that for these sources not only the average case redundancy 
but also the worst case redundancy
\begin{equation} \label{eq-worst-case-red}
\max_{t \in F_n, P_n(t)>0} |t|^{-1} \cdot (|E_{\text{dag}}(t)| + \log_2 P_n(t)) 
\end{equation}
converges to zero for $n \to \infty$.
More precisely, we show that \eqref{eq-worst-case-red} is bounded 
by $O(\log \log n / \log n)$ (respectively, $O((\log \log n)^2 / \log n)$) for the presented class of leaf-centric tree sources (respectively, depth-centric tree sources).
To prove this, we use results from \cite{HuLoNo17,Hubschle-Schneider15} according to which the minimal
DAG of a suitably balanced binary tree of size $n$ is bounded by $O(n / \log n)$, respectively $O(n\cdot\log \log n / \log n)$.

Our second main contribution is the application of {\em tree straight-line programs}, briefly TSLPs, for universal tree coding. 
A TSLP is a context-free tree grammar that produces exactly one tree, see Section~\ref{sec-TSLP} for the precise definition
and \cite{Lohrey15dlt} for a survey. TSLPs can
be viewed as the proper generalization of SLPs for trees. Whereas DAGs only have the ability to share repeated subtrees of a tree,
TSLPs can also share repeated tree patterns with a hole (so-called contexts).
In \cite{HuLoNo17}, the authors presented a linear time algorithm that computes
for a given binary tree $t$ of size $n$ a TSLP $\G_t$ of size $O(n / \log n)$. This shows the main advantage of TSLPs over DAGs:
There exist trees of any size $n$ for which the minimal DAG has size $n$ as well. In Section~\ref{sec-binary-coding} we  define
a binary encoding $B$ of TSLPs similar to the ones for SLPs \cite{KiYa00} and DAGs \cite{ZhangYK14}. We then consider the 
combined tree encoder $E_{\text{tslp}} : \T \to \{0,1\}^*$ with $E_{\text{tslp}}(t) = B(\G_t)$, and prove that its worst case redundancy (that is defined as 
in \eqref{eq-worst-case-red} with $E_{\text{dag}}$ replaced by $E_{\text{tslp}}$) is bounded by $O(\log \log n / \log n)$ for every structured
tree source that satisfies the {\em strong domination property} defined in Section~\ref{sec-univesal-tslp}. The strong domination property 
is a strengthening of the domination property from \cite{ZhangYK14}, and this is what we have to pay extra for our TSLP-based encoding
in contrast to the DAG-based encoding from  \cite{ZhangYK14}. On the other hand, our approach has two main advantages over 
\cite{ZhangYK14}:
\begin{itemize}
\item The representation ratio negligibility property from  \cite{ZhangYK14}
is no longer needed.
\item We get bounds on the worst case redundancy instead of the average case redundancy.
\end{itemize}
Both advantages are based on the fact that the grammar-based compressor from \cite{HuLoNo17} computes a TSLP of worst case size $O(n / \log n)$
for a binary tree of size $n$.

Finally, we present a class of leaf-centric sources (Section~\ref{sec-leaf-centric}) as well as a class of depth-centric sources 
 (Section~\ref{sec-depth-centric}) having the strong domination property. These classes are orthogonal to the classes considered
 in \cite{ZhangYK14}.

\section{Preliminaries}

In this section, we introduce some basic definitions concerning
information theory (Section~\ref{sec-empirical}), binary trees 
(Section~\ref{sec-trees}) and tree straight-line programs (Section~\ref{sec-TSLP}). The latter
are our key formalism for the compression of binary trees.
 
With $\N$ we denote the natural numbers including $0$.
We use the standard $O$-notation and if $b$ is a constant, then
we just write $O(\log n)$ for $O(\log_b n)$.
For the unit interval $\{ r \in \mathbb{R} \mid 0 \leq r \leq 1 \}$
we write $[0,1]$.

\subsection{Empirical distributions and empirical entropy} \label{sec-empirical}

Let $\overline{a} = (a_1, a_2, \ldots, a_n)$ be a tuple of elements that are from some (not necessarily finite) set $A$. 
The {\em empirical distribution} $p_{\overline{a}} : \{a_1, a_2, \ldots, a_n \} \to \mathbb{R}$ of $\overline{a}$ is defined by
$$
p_{\overline{a}}(a) = \frac{|\{i \mid 1\le i\le n,\; a_i = a \}|}{n}  .
$$
We use this definition also for words over some alphabet by identifying a word
$w = a_1 a_2 \cdots a_n$ with the tuple $(a_1, a_2, \ldots, a_n)$.
The {\em unnormalized empirical entropy} of $\overline{a}$ is 
$$
H(\overline{a}) = - \sum_{i=1}^n \log p_{\overline{a}}(a_i) .
$$
A well-known generalization of Shannon's inequality states that for 
all real numbers $p_1, \ldots, p_k, q_1, \ldots q_k > 0$, if
$\sum_{i=1}^k p_i = 1 \geq \sum_{i=1}^k q_i$ then
$$
\sum_{i=1}^k -p_i \log_2(p_i) \leq \sum_{i=1}^k -p_i \log_2(q_i) 
$$
see \cite{Aczel} for a proof.
As a consequence, for a tuple $\overline{a} = (a_1, a_2, \ldots, a_n)$ 
with $a_1, \ldots, a_n \in A$ 
and real numbers $q(a) > 0$ ($a \in A$) with $\sum_{i=1}^n q(a_i) \leq 1$ 
we have
\begin{equation} \label{shannon}
\sum_{i=1}^n - \log_2( p_{\overline{a}}(a_i))  \leq \sum_{i=1}^n - \log_2( q(a_i) ) .
\end{equation}

\subsection{Trees, tree sources, and tree compressors} \label{sec-trees}

With $\T$ we denote the set of all binary trees. We identify $\T$ with the set
of terms that are built from the binary symbol $f$ and the constant $a$. Formally,
$\T$ is the smallest set such that (i) $a \in \T$ and (ii) if $t_1, t_2 \in \T$ then also $f(t_1,t_2) \in \T$.
With $|t|$ we denote the number of occurrences of the constant $a$ in $t$. This is the number of leaves
of $t$. Let $\T_n = \{ t \in \T \mid |t| = n\}$ 
 for $n \geq 1$. The depth $d(t)$ of the tree $t$ is recursively defined
by $d(a) = 0$ and $d( f(t_1, t_2)) = \max \{ d(t_1), d(t_2) \}+1$. Let $\T^d = \{ t \in \T \mid d(t) = d \}$ for $d \in \mathbb{N}$.

Occasionally, we will consider a binary tree $t $ as a graph with nodes and edges in the usual way.
Note that a tree $t \in \T_n$ has $2n-1$ nodes in total: $n$ leaves and $n-1$ internal nodes.
For a node $v$ we write $t[v]$ for the {\em subtree} rooted at $v$ in $t$.

A {\em context} is a binary tree $t$, where exactly one leaf is labelled with the special symbol $x$ (called the 
{\em parameter}); all other
leaves are labelled with $a$. For a context $t$ we define $|t|$ to be the number of $a$-labelled leaves of $t$
(which is the number of leaves of $t$ minus 1).
We denote with $\cT$ the set of all contexts and define $\cT_n = \{ t \in \cT \mid |t| = n\}$ for $n \in \mathbb{N}$.
For a tree or context $t\in\T \cup \cT$ and a context $s\in\cT$,
we denote by $s(t)$ the tree or context which results from $s$ by replacing the parameter $x$ by $t$. 
For example $s=f(a,x)$ and $t=f(a,a)$ yields $s(t)=f(a,f(a,a))$.
The depth $d(t)$ of a context $t\in\cT$ is defined as the depth of the tree $t(a)$.

A tree source is a pair $( (\mathcal{F}_i)_{i \in \mathbb{N}}, P)$ such that the following conditions hold:
\begin{itemize}
\item $\mathcal{F}_i \subseteq \T$ is non-empty and finite for every $i \geq 0$,
\item $\mathcal{F}_i \cap \mathcal{F}_j = \emptyset$ for $i \neq j$ and $\bigcup_{i \geq 0} \mathcal{F}_i = \T$, 
i.e., the sets $\mathcal{F}_i$ form a partition of $\T$,
\item $P:\T\to [0,1]$ and $\sum_{t \in \mathcal{F}_i} P(t) = 1$ for every $i \geq 0$, i.e., $P$ restricted to $\mathcal{F}_i$ is a probability
distribution.
\end{itemize}
In this paper, we consider only two cases for the partition $(\mathcal{F}_i)_{i \in \mathbb{N}}$: either 
$\mathcal{F}_i = \T_{i+1}$ for all $i \in \mathbb{N}$ (note that there is no tree of size $0$) or 
$\mathcal{F}_i = \T^i$ for all $i \in \mathbb{N}$. Tree source of the former (resp., latter) type are called {\em leaf-centric}
(resp., {\em depth-centric}).

A {\em tree encoder} is an injective mapping $E : \T \to \{0,1\}^*$ such that the range $E(\T)$ is prefix-free, i.e.,
there do not exist $t, t' \in \T$ with $t \neq t'$ such that $E(t)$ is a prefix of $E(t')$.
We define the {\em worst-case redundancy} of $E$ with respect to 
the tree source $\mathcal{S} = ((\mathcal{F}_i)_{i \in \mathbb{N}}, P)$
as the mapping $i \mapsto R(E, \mathcal{S},i)$ ($i \in \mathbb{N}$) with
$$
R(E, \mathcal{S},i) = \max_{t \in \mathcal{F}_i, P(t)>0}  \frac{1}{|t|} \cdot  (|E(t)| + \log_2 P(t))
$$
The worst-case redundancy is also known as the {\em maximal pointwise redundancy}.

\subsection{Tree straight-line programs} \label{sec-TSLP}

 We now introduce tree straight-line programs for binary trees.
 Let $V$ be a finite ranked alphabet, where each symbol $A\in V$ has an associated rank $0$ or $1$.
 The elements of $V$ are called \emph{nonterminals}.
 We assume that $V$ contains at least one element of rank $0$ and that $V$ is disjoint from the set $\{f,a,x\}$, which are the labels used for binary trees and contexts.
 We use $V_0$ (respectively, $V_1$) for the set of nonterminals of rank $0$ (resp. of rank $1$).
 The idea is that nonterminals from $V_0$ (respectively, $V_1$) derive to trees from $\T$ (respectively, 
 contexts from $\cT$).
 We denote by $\T_V$ the set of trees over $\{f,a\} \cup V$,
i.e. each node in a tree $t\in\T_V$ is labelled with a symbol from $\{f,a\} \cup V$ and
the number of children of a node corresponds to the rank of its label.
With $\cT_V$ we denote the corresponding set of all contexts, i.e., the set of trees
over  $\{f,a,x\} \cup V$, where the parameter symbol $x$ occurs exactly once and at a leaf position.
 Formally, we have $\T \subset \T_V$ and $\cT \subset \cT_V$.
 A \emph{tree straight-line program} $\G$, or short \emph{TSLP}, is a tuple $(V, A_0, r)$,
 where $A_0 \in V_0$ is the start nonterminal and 
$r:V\to (\T_V\cup \cT_V)$ is the function which assigns each nonterminal its unique right-hand side.
It is required that if $A\in V_0$ (respectively, $A \in V_1$), 
then $r(A) \in \T_V$ (respectively, $r(A) \in \cT_V$).
Furthermore, the binary relation $\{ (A,B) \in V \times V \mid  B\text{ is a label in }r(A)\}$ needs to be acyclic.
These conditions ensure that exactly one tree is derived from the start nonterminal $A_0$ 
 by using the rewrite rules $A \to r(A)$ for $A \in V$.
To define this formally, we define $\val_{\G}(t) \in \T$ for $t \in \T_V$ and 
$\val_{\G}(t) \in \cT$ for $t \in \cT_V$ inductively by the following rules:
\begin{itemize}
\item $\val_{\G}(a) = a$ and $\val_{\G}(x)=x$,
\item $\val_{\G}(f(t_1, t_2)) = f( \val_{\G}(t_1), \val_{\G}(t_2))$ for $t_1, t_2 \in  \T_V \cup \cT_V$ (and $t_1 \in \T_V$ or $t_2 \in \T_V$ since there is at most one parameter in $f(t_1,t_2)$),
\item $\val_{\G}(A) = \val_{\G}(r(A))$ for $A \in V_0$,
\item $\val_{\G}(A(s)) = \val_{\G}(r(A)) (\val_{\G}(s) )$ for $A \in V_1$, $s \in  \T_V \cup \cT_V$ 
(note that $\val_{\G}(r(A))$ is a context $t'$, so we can built $t'(\val_{\G}(s) )$).
\end{itemize}
 The tree defined by $\G$ is $\val(\G) = \val_{\G}(A_0)\in\T$. 
 Moreover, for $A \in V_1$ we also write $\val_{\G}(A)$ for $\val_{\G}(A(x))$.

\begin{example}\label{example:TSLP}
	Let $\G = (\{A_0,A_1,A_2\}, A_0, r)$ be a TSLP with $A_0,A_1 \in V_0, A_2 \in V_1$ and
	\[
	r(A_0)=f(A_1,A_2(a)),\; r(A_1) = A_2(A_2(a)), \; r(A_2) = f(x,a).
	\]
	We get $\val_{\G}(A_2) = f(x,a)$, $\val_{\G}(A_1) = f(f(a,a), a)$ and 
	$\val({\G}) = \val_{\G}(A_0) = f( f(f(a,a), a), f(a,a))$.
\end{example}
In this paper, we will consider two classes of syntactically restricted TSLPs:  (i)  DAGs (directed acyclic graphs) and (ii) 
TSLPs in normal form.
Let us start with the former; normal form TSLPs will be introduced in Section~\ref{sec-normal-form}.

\section{Tree compression with DAGs} \label{sec-dag}

In this section we sharpen some of the results from  \cite{ZhangYK14}, where universal source coding of binary trees using
minimal DAGs (directed acyclic graphs) is investigated. In \cite{ZhangYK14}, only bounds on the 
average redundancy for certain classes of tree sources were shown. Here we extend these bounds (for the same classes
of tree sources) to the worst-case redundancy.

\subsection{Directed acyclic graphs (DAGs)}

A DAG is a TSLP $\D = (V, A_0, r)$ such that 
$V = \{ A_0, A_1, \ldots, A_{n-1}\}$ for some $n \in \N$, $n \geq 1$, $V = V_0$ (i.e., all nonterminals have rank $0$), and 
for  every $A_i \in V$, 
the right-hand side $r(A_i)$ is of the form 
$f(\alpha_1,\alpha_2)$ with $\alpha_1, \alpha_2 \in \{a, A_{i+1},\ldots, A_{n-1}\}$. 
Note that a TSLP of this form generates a tree with at least two leaves.
In order to include the tree $a$ with a single leaf, we also 
allow the TSLP $\G_a = (\{A_0\}, A_0, A_0 \mapsto a)$. 
We define the size of a DAG as $|\D| = n+1$. 

In contrast to general TSLPs, every binary tree $t$ has a unique (up to renaming of nonterminals) minimal DAG $\D_t$, 
whose size is the number of different (pairwise non-isomorphic) subtrees of $t$. 
The idea is to introduce for every subtree $f(t_1,t_2)$ of size at least two a nonterminal $A_i$ with 
$r(A_i) = f(\alpha_1,\alpha_2)$, where $\alpha_i = a$ if $t_i = a$ and $\alpha_i$ 
is the nonterminal corresponding to the subtree $t_i$ if $|t_i| \geq 2$ ($i=1,2$).
We will only use this minimal DAG $\D_t$ in the sequel. 

\begin{example} \label{example-dag}
Consider the tree $t_n = f(f(f(\cdots f(a,a), \cdots a),a),a)$, where $f$ occurs $n$ times. We have $|t_n|=n+1$.
The minimal DAG of $t_n$ is 
$( \{ A_0, A_1, \ldots, A_{n-1}\}, A_0, r_n)$, where 
$r_n(A_i) = f(A_{i+1},a)$ for $0 \leq i \leq n-2$ and $r_n(A_{n-1}) = f(a,a)$ and its size is $n+1$.
\end{example}
The above example shows that in the worst-case, the size of the minimal DAG is not smaller than the size of the tree.

\subsection{Universal source coding with DAGs}

The following condition on a tree source  was introduced in  \cite{ZhangYK14},
where it is called the domination property (later, we will introduce a strong domination property):
Let $((\mathcal{F}_i)_{i \in \mathbb{N}}, P)$ be a tree source as defined in Section~\ref{sec-trees}.
We say that $((\mathcal{F}_i)_{i \in \mathbb{N}}, P)$ has 
the {\em weak domination property} if there exists a mapping $\lambda : \T  \to \mathbb{R}_{>0}$
with the following properties:
\begin{enumerate}[(i)]
\item \label{wdom1} $\lambda(t) \geq P(t)$ for every $t \in \T$.
\item \label{wdom2} $\lambda(f(s,t)) \leq \lambda(s) \cdot \lambda(t)$ for all $s,t \in \T$ 
\item \label{wdom4} There are constants $c_1, c_2$ such that $\sum_{t \in \T_n} \lambda(t) \leq c_1 \cdot n^{c_2}$ for all $n \geq 1$.
\end{enumerate}

In \cite{ZhangYK14}, the authors define a binary encoding $B(\D_t) \in \{0,1\}^*$, such that $B(\D_t)$ is not a prefix of 
$B(\D_{t'})$ for all binary trees $t, t'$ with $t \neq t'$.
The precise definition of $B(\D_t)$ is not important for us; all we need is the following
bound from \cite[Theorem~2]{ZhangYK14}, where $E_{\text{dag}} : \T \to \{0,1\}^*$ is the 
tree encoder with $E_{\text{dag}}(t) = B(\D_t)$.

\begin{lemma} \label{lemma-entropy-bound-dag}
Assume that $((\mathcal{F}_i)_{i \in \mathbb{N}}, P)$ has 
the weak domination property. 
Let $t \in \T_n$ with $n \geq 2$ and $P(t)>0$, and let $\D_t$ be the minimal DAG for $t$.
We have
$$
\frac{1}{n} \cdot ( |E_{\text{dag}}(t)| + \log_2( P(t) ) ) \leq  O(|\D_t|/n) + O( |\D_t|/n \cdot \log_2 (n/|\D_t|)) .
$$
\end{lemma}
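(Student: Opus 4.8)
This is \cite[Theorem~2]{ZhangYK14}; I sketch the strategy one would follow. Write $m=|\D_t|$. The plan is to (a) bound the code length $|E_{\text{dag}}(t)|=|B(\D_t)|$ by an empirical‑entropy quantity attached to the minimal DAG, (b) use the generalised Shannon inequality \eqref{shannon} and the weak domination property to trade that quantity for $-\log_2 P(t)$ up to a combinatorial remainder, and (c) bound the remainder by $O(m)+O(m\log_2(n/m))$, after which dividing by $n$ is immediate.

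For (a): the encoder of \cite{ZhangYK14} follows the SLP encoder of \cite{KiYa00}. Ordering the $m-1$ nonterminals of $\D_t$ by non‑decreasing size of the subtree $s_i$ they represent, $\D_t$ is described by self‑delimiting codes for $m$ and $n$, $O(1)$ structural bits per nonterminal, and the word $\gamma$ of length $2(m-1)$ over an alphabet of size $\le m+1$ obtained by concatenating the two child‑pointers in each right‑hand side. Compressing $\gamma$ by a prefix code reaching its unnormalised empirical entropy $H(\gamma)$ gives $|E_{\text{dag}}(t)|\le H(\gamma)+O(m)+O(\log n)$; checking that coding $\gamma$ via its symbol frequencies genuinely beats the naive $2(m-1)\lceil\log_2 m\rceil$ bound — so that the dependence is on $\log(n/m)$ and not $\log m$ — is already the first point that needs care.

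For (b): let $\lambda$ witness weak domination and identify each nonterminal in $\gamma$ with the subtree it derives, so $\gamma$ is a tuple of subtrees of $t$, each of size $\le n$ and each occurring $\le 2m$ times. The polynomial‑sum property gives $Z:=2m\sum_{k\le n}\sum_{s\in\T_k}\lambda(s)=O(n^{c_2+2})$, so $q:=\lambda/Z$ is admissible in \eqref{shannon}; applying \eqref{shannon} to $\gamma$ bounds $H(\gamma)$ by $\sum_{s}c_s(-\log_2\lambda(s))+O(m\log n)$, where $s$ ranges over the distinct subtrees of $t$ and $c_s$ counts the uses of $s$ as a child in $\gamma$. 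Submultiplicativity, applied along the edges of $\D_t$ and multiplied over all internal subtrees, yields $\lambda(t)\le\prod_s\lambda(s)^{c_s-e_s}$ with $e_s\in\{0,1\}$ ($e_s=1$ exactly for the proper internal subtrees); hence $\sum_s c_s(-\log_2\lambda(s))=(-\log_2\lambda(t))+D$ with $D:=\sum_{s\text{ proper internal}}(-\log_2\lambda(s))$, and using $\lambda(t)\ge P(t)$ we get $|E_{\text{dag}}(t)|+\log_2 P(t)\le D+O(m\log n)$.

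For (c), the crux: the defect $D$, a sum of $m-2$ terms, is bounded by a second use of \eqref{shannon} on the tuple of distinct proper subtrees $s_2,\dots,s_{m-1}$, against a sub‑probability nearly uniform within each size‑class. The estimate rests on two structural facts — a tree of size $n$ has at most $n/k$ distinct subtrees of size $k$ (so weight of order $k/n$ can be placed on each), and a collection of $m$ subtrees of total weight $n$ can be wired into a binary tree in at most $\binom{n+m}{m}=2^{O(m)+O(m\log_2(n/m))}$ ways — and it is here that weak domination is used decisively: the polynomial‑sum property caps the $\lambda$‑mass of each size‑class, which is what keeps the individual $\lambda(s)$ from being so small that $D$ explodes. (Morally: under a weakly dominated source, probability mass can sit only on trees whose subtrees are few enough per size, i.e.\ on genuinely DAG‑compressible trees.) This gives $D\le O(m)+O(m\log_2(n/m))+O(\log n)$; since $|\D_t|\ge\log_2 n$ for $t\in\T_n$ the $O(\log n)$ is $O(m)$, and a size‑class‑by‑size‑class version of (b) likewise turns the crude $O(m\log n)$ there into $O(m)+O(m\log_2(n/m))$. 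I expect the genuinely delicate part to be exactly this step: simultaneously extracting the $\log_2(n/m)$ saving from the combinatorics and invoking submultiplicativity together with the polynomial partial sums to certify that $P(t)$ is small relative to the number of trees sharing $t$'s DAG size.
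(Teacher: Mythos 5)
The paper itself does not prove this lemma --- it imports it verbatim from \cite[Theorem~2]{ZhangYK14} --- but the intended argument is the one carried out in detail for TSLPs in Lemma~\ref{lemma-entropy-bound}, and your sketch diverges from it at exactly the points where it cannot be repaired. In the real proof the enumerative part of the code pays only for the word obtained after deleting the \emph{first} occurrence of every nonterminal; that word is precisely the leaf sequence of the ``expand every rule once'' initial derivation tree, so the corresponding subtrees $t_1,\dots,t_\ell$ satisfy $\sum_i|t_i|\le n$ and $\lambda(t)\le\prod_i\lambda(t_i)$ by iterated submultiplicativity alone, and \eqref{shannon} with the per-size-class weights $q(u)=\frac{6}{\pi^2}\cdot\lambda(u)/(M_{|u|}\,|u|^2)$ followed by Jensen gives $-\log_2 P(t)+O(m)+O(m\log_2(n/m))$ with no leftover term. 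Your step (a) instead charges the empirical entropy of the \emph{full} pointer word $\gamma$, and this is already fatal: for the caterpillar $t_n$ of Example~\ref{example-dag} one has $H(\gamma)=\Theta(n\log n)$, while a weakly dominated source may put $P(t_n)=1$ (take $\lambda(u)=1$ for caterpillars and $\lambda(u)=4^{-|u|}$ otherwise; conditions (i)--(iii) are immediate since $C_k\le 4^k$), whereas the target bound is $O(m)+O(m\log_2(n/m))=O(n)$ here. The first-occurrence cost must be absorbed by the $O(m)$ positional part of the encoding, not by an entropy term, so no amount of later bookkeeping can recover the lemma from $H(\gamma)$.

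The defect term $D$ created in your steps (b)--(c) is where the gap becomes irreparable. Bounding $D=\sum_s(-\log_2\lambda(s))$ over the distinct proper internal subtrees from above requires \emph{lower} bounds on the values $\lambda(s)$, and the weak domination property provides none: $\lambda\ge P$ is useless because $P$ may vanish on the subtrees of $t$, and the polynomial bound on $\sum_{s\in\T_k}\lambda(s)$ caps $\lambda$ from \emph{above}, i.e.\ it makes $-\log_2\lambda(s)$ larger, not smaller. Nor can \eqref{shannon} be applied to $D$: it is a sum of the form of the \emph{right-hand} side of \eqref{shannon}, not an empirical entropy. Concretely, for the $\lambda$ of Theorems~\ref{theorem:lcts} and \ref{theorem:dcts} one has $-\log_2\lambda(s)\approx 2|s|$ whenever $P_\sigma(s)$ is negligible, so for a caterpillar-like $t$ the distinct proper subtrees have sizes $2,3,\dots$ and $D=\Theta(n^2)$, far beyond $O(m)+O(m\log_2(n/m))$. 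Your ``moral'' is also false: weak domination does not confine the mass to DAG-compressible trees (again the caterpillar; this is exactly why \cite{ZhangYK14} must additionally assume the representation-ratio negligibility property), so the counting facts invoked in (c) cannot certify what you need. The fix is to abandon the telescoping over all DAG edges and redo (a)--(b) along the lines of Lemma~\ref{lemma-entropy-bound}: entropy only of the non-first occurrences, submultiplicativity applied along the expand-once derivation, per-size-class weights, then Jensen.
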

This bound is used in \cite{ZhangYK14} to show that for certain leaf-centric and depth-centric tree sources 
the encoding $E_{\text{dag}}$ is universal in the sense that the average redundancy converges to zero.
Here, we want to show that for the same tree sources already the worst-case redundancy 
converges to zero. Let us first define the specific classes of tree
sources studied in \cite{ZhangYK14}.

\subsection{Leaf-centric binary tree sources} \label{sec-leaf-centric}

We recall the definition of a natural class of leaf-centric tree sources from  \cite{ZhangYK14}:
Let $\Sigma_{\text{leaf}}$ be the set of all functions $\sigma:(\mathbb{N}\setminus\{0\})\times(\mathbb{N}\setminus\{0\})\to[0,1]$  such that
for all $n\ge 2$:
\begin{equation} \label{eq-sigma-leaf}
\sum_{i,j\ge 1,\;i+j=n}\sigma(i,j)=1 .
\end{equation}
For $\sigma \in \Sigma_{\text{leaf}}$ we define $P_\sigma:\T \to [0,1]$ inductively by:
\begin{eqnarray} \label{lcsource1}
P_\sigma(a) &=& 1, \\    \label{lcsource2}
P_\sigma(f(s,t)) & = & \sigma(|s|, |t|) \cdot P_\sigma(s) \cdot P_\sigma(t) .
\end{eqnarray}
We have $\sum_{t\in\T_n}P_\sigma(t)=1$ and thus 
$( (\T_{i})_{i \geq 1}, P_\sigma)$ is a leaf-centric tree source.
The following result is implicitly shown in \cite{ZhangYK14} (see also the proof of Theorem~\ref{theorem:lcts}).

\begin{lemma} \label{lemma-leaf-centric-weak-dom}
For every $\sigma \in \Sigma_{\text{leaf}}$, the  leaf-centric tree source $( (\T_{i})_{i \geq 1}, P_\sigma)$
has the weak domination property.
\end{lemma}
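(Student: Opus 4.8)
The plan is to exhibit an explicit dominating function $\lambda : \T \to \mathbb{R}_{>0}$ and verify the three conditions in the definition of the weak domination property. The natural first attempt is to take $\lambda = P_\sigma$ itself. Conditions \eqref{wdom1} and \eqref{wdom2} are then immediate: \eqref{wdom1} holds with equality, and \eqref{wdom2} follows from the recursion \eqref{lcsource2} together with $\sigma(|s|,|t|) \leq 1$, which gives $P_\sigma(f(s,t)) = \sigma(|s|,|t|) \cdot P_\sigma(s) \cdot P_\sigma(t) \leq P_\sigma(s) \cdot P_\sigma(t)$. Condition \eqref{wdom4} is also satisfied, and in fact with the sharp constant: since $P_\sigma$ restricted to $\T_n$ is a probability distribution, we have $\sum_{t \in \T_n} P_\sigma(t) = 1$, so we may take $c_1 = c_2 = 1$ (or even $c_2 = 0$). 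Thus $\lambda = P_\sigma$ works directly.

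First I would recall the identity $\sum_{t \in \T_n} P_\sigma(t) = 1$, which is asserted in the text right after the definition of $P_\sigma$; if a self-contained argument is wanted, it follows by induction on $n$: the base case $n = 1$ is $P_\sigma(a) = 1$ by \eqref{lcsource1}, and for $n \geq 2$ every $t \in \T_n$ is uniquely of the form $f(s,t')$ with $s \in \T_i$, $t' \in \T_j$, $i + j = n$, $i,j \geq 1$, so that
\[
\sum_{t \in \T_n} P_\sigma(t) = \sum_{i+j=n,\, i,j \geq 1} \sigma(i,j) \sum_{s \in \T_i} P_\sigma(s) \sum_{t' \in \T_j} P_\sigma(t') = \sum_{i+j=n,\, i,j \geq 1} \sigma(i,j) = 1,
\]
using the induction hypothesis and \eqref{eq-sigma-leaf}. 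Then I would state the choice $\lambda := P_\sigma$ and check \eqref{wdom1}, \eqref{wdom2}, \eqref{wdom4} in that order, as sketched above. One small point of care: the definition requires $\lambda$ to take values in $\mathbb{R}_{>0}$, i.e. to be strictly positive everywhere, whereas $P_\sigma(t)$ can be zero (when some $\sigma(|s|,|t|) = 0$ occurs in the derivation of $t$).

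Consequently the only genuine obstacle is this strict-positivity requirement, and it is easily circumvented. I would replace $\lambda = P_\sigma$ by $\lambda(t) := \max\{P_\sigma(t),\, 2^{-c|t|}\}$ for a suitable constant $c$, or more cleanly define $\lambda$ by the same recursion as $P_\sigma$ but with $\sigma(i,j)$ replaced by $\max\{\sigma(i,j), \varepsilon_{i+j}\}$ for a sufficiently fast-decaying sequence $\varepsilon_n > 0$; alternatively, one can simply use $\lambda(t) = 2^{-|t|}$ when $P_\sigma(t)=0$ is a concern and argue that since there are at most (Catalan-many, hence) $4^n$ trees in $\T_n$, choosing the floor small enough keeps $\sum_{t \in \T_n}\lambda(t)$ polynomially — indeed here even linearly, up to a constant — bounded. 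In any of these variants, \eqref{wdom1} holds because $\lambda \geq P_\sigma$ pointwise, \eqref{wdom2} holds because the multiplicative recursion with coefficients $\leq 1$ is preserved, and \eqref{wdom4} holds because $\sum_{t \in \T_n} P_\sigma(t) = 1$ and the added correction terms sum to a polynomially (even constantly) bounded amount. This completes the verification; the argument is essentially bookkeeping, with the positivity fix being the only subtlety worth spelling out.
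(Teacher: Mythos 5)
Your main construction $\lambda(t)=\max\{P_\sigma(t),\,2^{-c|t|}\}$ (say $c=2$, so that $|\T_n|\cdot 2^{-cn}\le 4^{n-1}\cdot 4^{-n}\le 1$ and the sum over $\T_n$ stays bounded by a constant) is correct and is essentially the paper's own argument: inside the proof of Theorem~\ref{theorem:lcts} the paper takes $\lambda(t)=\max\{1/C_{|t|},P_\sigma(t)\}$ and checks conditions (i), (ii) and the polynomial-sum condition exactly as you do, the Catalan floor playing the role of your exponential floor via $C_{m+k}\ge C_m\cdot C_k$, and, as in your argument, no monotonicity of $\sigma$ is needed for the weak property. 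One caution: among your alternative patches, the piecewise one (imposing the floor only at trees with $P_\sigma(t)=0$) does not satisfy condition (ii) --- take $s$ with $P_\sigma(s)>0$ smaller than the floor at size $|s|+1$ and $\sigma(|s|,1)=0$, so that $\lambda(f(s,a))$ equals the floor while $\lambda(s)\lambda(a)=P_\sigma(s)$ is smaller --- so stick with the max (or the modified-$\sigma$) version.
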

We say that a mapping $\sigma \in \Sigma_{\text{leaf}}$ is {\em leaf-balanced} if there exists a constant $c$ such that
for all $(i,j) \in (\mathbb{N}\setminus\{0\})\times(\mathbb{N}\setminus\{0\})$ with $\sigma(i,j) > 0$ we have
$$
\frac{i+j}{\min\{i,j\}} \leq c .
$$
In \cite{ZhangYK14} it is shown that for a leaf-balanced $\sigma \in \Sigma_{\text{leaf}}$, the tree source $( (\T_{i})_{i \geq 1}, P_\sigma)$ has
the so called representation ratio negligibility property. This means that the 
average compression ratio achieved by the minimal DAG
(formally, $\sum_{t \in \T_n} P_\sigma(t) \cdot |\D_t|/n$)
converges to zero for $n \to \infty$. Using a result from \cite{HuLoNo17}, we show the following stronger property.

\begin{lemma}\label{lemma:leaf-balanced}
For every leaf-balanced mapping $\sigma \in \in \Sigma_{\text{leaf}}$, there 
exists a constant $\alpha$ such that for every binary tree $t\in \T_n$ with  $P_\sigma(t)>0$
we have  $|\mathcal{D}_t|\leq  \alpha \cdot n/\log_2 n$.
\end{lemma}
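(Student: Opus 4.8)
The plan is to combine the structural consequence of leaf-balancedness with a known worst-case bound on DAG sizes from \cite{HuLoNo17}. First I would unwind what leaf-balancedness buys us for a tree $t$ with $P_\sigma(t)>0$: at every internal node $v$ of $t$ with subtree $t[v]=f(s,t')$, we must have $\sigma(|s|,|t'|)>0$, hence $\frac{|s|+|t'|}{\min\{|s|,|t'|\}}\le c$. This means that at every internal node the smaller of the two children holds at least a $1/c$ fraction of the leaves of the subtree rooted there; equivalently, the larger child holds at most a $(1-1/c)$ fraction. Iterating this down any root-to-leaf path shows that the number of leaves shrinks by a constant factor at each step, so $d(t)=O(\log n)$, and more importantly $t$ is ``weight-balanced'' in the sense required by the relevant theorem of \cite{HuLoNo17,Hubschle-Schneider15}.

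The second step is to invoke the quantitative bound on minimal DAG size for such balanced trees. The paper already announced in the introduction that ``the minimal DAG of a suitably balanced binary tree of size $n$ is bounded by $O(n/\log n)$'' using results from \cite{HuLoNo17,Hubschle-Schneider15}. Concretely, I would cite the statement that for a binary tree $t\in\T_n$ in which every subtree $f(s,t')$ satisfies $\min\{|s|,|t'|\}\ge \frac{1}{c}(|s|+|t'|)$ for a fixed constant $c$, the number of pairwise non-isomorphic subtrees of $t$ is $O(n/\log n)$, with the hidden constant depending only on $c$. Since $|\D_t|$ is by definition one plus the number of distinct subtrees of size at least two (plus the possible leaf $a$), this immediately gives $|\D_t|\le \alpha\cdot n/\log_2 n$ for a constant $\alpha=\alpha(c)$, which is exactly the claim.

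The only real work is verifying that the ``leaf-balanced'' condition as defined here (a bound on $\frac{i+j}{\min\{i,j\}}$ at every node with positive $\sigma$-value) matches the precise balancedness hypothesis under which the $O(n/\log n)$ DAG bound is proved in \cite{HuLoNo17,Hubschle-Schneider15}; the two formulations are trivially interchangeable, since $\frac{i+j}{\min\{i,j\}}\le c$ is equivalent to $\min\{i,j\}\ge\frac{1}{c}(i+j)$, i.e. to $\frac{\max\{i,j\}}{\min\{i,j\}}\le c-1$. So the main obstacle is not a mathematical difficulty but rather a bookkeeping one: pinning down the exact phrasing of the cited theorem and checking the constant propagates correctly. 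I would also note that trees with $P_\sigma(t)>0$ are precisely those all of whose internal nodes have labels $(|s|,|t'|)$ in the support of $\sigma$, so the hypothesis applies to exactly the trees we care about, uniformly with the single constant $c$ witnessing leaf-balancedness of $\sigma$.
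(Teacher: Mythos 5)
Your proposal is correct and follows essentially the same route as the paper: translate the leaf-balancedness of $\sigma$ into a node-wise size-balance condition on every subtree of any $t$ with $P_\sigma(t)>0$, and then invoke the $O(n/\log n)$ bound on minimal DAG size for balanced trees from \cite{HuLoNo17,Hubschle-Schneider15}. The only cosmetic difference is that the paper phrases the cited hypothesis via the (weaker) notion of $\beta$-balanced trees of \cite{HuLoNo17}, where only one node of each parent--child pair needs bounded sibling-size ratio, which your stronger every-node condition (ratio at most $c-1$, or $c$ as in the paper) trivially implies.
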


\begin{proof}
In~\cite{HuLoNo17} the authors introduce the notion of $\beta$-balanced trees.
Let $v$ be a non-leaf node in $t\in\T$, i.e. $t[v]=f(t_1,t_2)$ for binary trees $t_1$ and $t_2$.
Then $v$ is $\beta$-balanced if $|t_1|\le \beta\cdot|t_2|$ and $|t_2|\le \beta\cdot|t_1|$.
The tree $t$ is $\beta$-balanced if for all non-leaf nodes $u$ and $v$ in $t$ such that $u$ is a child node of $v$,
at least one of the nodes $u$, $v$ is $\beta$-balanced.
It is shown in~\cite{HuLoNo17} that for every constant $\beta$ there exists a constant $\alpha$ (depending only
on $\beta$) such that the minimal dag $\mathcal{D}_t$ of a $\beta$-balanced tree $t\in \T_n$ has size at most $\alpha \cdot n / \log_2 n$.
It follows that we only need to show that a tree $t\in\T_n$ with $P_\sigma(t)>0$ is $\beta$-balanced for a constant $\beta$. 
Since the mapping $\sigma$ is leaf-balanced, every subtree $f(t_1,t_2)$ of $t$ satisfies
$|t_1|+|t_2|\le c\cdot\min\{|t_1|,|t_2|\}$, where $c \geq 1$ is a constant.
Without loss of generality assume that $|t_1|\le|t_2|$. We get $|t_1|\le c\cdot |t_2|$ and
$$|t_2|\le |t_1|+|t_2|\le c\cdot\min\{|t_1|,|t_2|\}=c\cdot |t_1|,$$
which shows that $t$ is $c$-balanced.
\end{proof}

\begin{corollary}
Let $\sigma \in \Sigma_{\text{leaf}}$ be leaf-balanced, and let
$\mathcal{S} = ( (\T_{i})_{i \geq 1}, P_\sigma)$ be the corresponding  leaf-centric tree source.
Then, we have $R(E_{\text{dag}}, \mathcal{S},i) \leq O(\log \log i / \log i)$.
\end{corollary}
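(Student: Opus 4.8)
The plan is to combine the three ingredients that are already in place: Lemma~\ref{lemma-entropy-bound-dag} (the entropy bound for $E_{\text{dag}}$ under the weak domination property), Lemma~\ref{lemma-leaf-centric-weak-dom} (every leaf-centric source $P_\sigma$ has the weak domination property), and Lemma~\ref{lemma:leaf-balanced} (for leaf-balanced $\sigma$ the minimal DAG of any $t \in \T_n$ with $P_\sigma(t)>0$ has size $|\D_t| \le \alpha \cdot n/\log_2 n$). The first two immediately give, for every $t \in \T_n$ with $n \ge 2$ and $P_\sigma(t)>0$,
$$
\frac{1}{n}\bigl(|E_{\text{dag}}(t)| + \log_2 P_\sigma(t)\bigr) \le O(|\D_t|/n) + O\bigl(|\D_t|/n \cdot \log_2(n/|\D_t|)\bigr).
$$
So the whole argument reduces to showing that the right-hand side is $O(\log\log n/\log n)$ once we substitute the bound $|\D_t| \le \alpha n/\log_2 n$.

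The key step is therefore a monotonicity/calculus observation: the function $g(x) = x + x\log_2(n/x) = x(1 + \log_2 n - \log_2 x)$ is increasing in $x$ on the relevant range $1 \le x \le n$ (its derivative is $\log_2 n - \log_2 x \ge 0$ there, up to the $1/\ln 2$ constant), so we may replace $|\D_t|$ by its upper bound $\alpha n/\log_2 n$ and divide by $n$. This yields
$$
\frac{1}{n}\bigl(|E_{\text{dag}}(t)| + \log_2 P_\sigma(t)\bigr) \le O\!\left(\frac{\alpha}{\log_2 n}\right) + O\!\left(\frac{\alpha}{\log_2 n}\cdot \log_2\!\Bigl(\frac{\log_2 n}{\alpha}\Bigr)\right) = O\!\left(\frac{\log\log n}{\log n}\right),
$$
where the first term is absorbed into the second and the constant $\alpha$ disappears into the $O$-notation. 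Taking the maximum over all $t \in \T_i$ with $P_\sigma(t)>0$ (note $\mathcal{F}_i = \T_{i+1}$ here, but this only shifts the index by one and does not affect the asymptotics) gives $R(E_{\text{dag}}, \mathcal{S}, i) \le O(\log\log i/\log i)$, which is the claim. One should note that Lemma~\ref{lemma:leaf-balanced} as stated requires $n$ large enough for $\log_2 n$ to be positive and for the DAG bound to be meaningful; the finitely many small values of $i$ contribute only a bounded additive term and are swallowed by the $O$-notation as well.

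I do not expect any genuine obstacle here: the statement is a routine corollary, and the only thing to be careful about is the direction of the substitution (one needs the entropy bound to be monotone increasing in $|\D_t|$, which is exactly why the elementary calculus check on $g$ is the "hard part" — and it is not hard). The one subtlety worth a sentence in the write-up is that the bound in Lemma~\ref{lemma-entropy-bound-dag} is a priori a function of both $n$ and $|\D_t|$, so one must justify that plugging in the worst-case value of $|\D_t|$ is legitimate; the monotonicity of $g$ does exactly that.
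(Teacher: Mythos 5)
Your proposal is correct and follows essentially the same route as the paper: combine Lemma~\ref{lemma-entropy-bound-dag} and Lemma~\ref{lemma-leaf-centric-weak-dom} with the DAG-size bound of Lemma~\ref{lemma:leaf-balanced}, then substitute $|\D_t| \le \alpha n/\log_2 n$ into the bound, justified by monotonicity of the relevant function (the paper makes this monotonicity check explicit only in the analogous depth-centric corollary, but it is the same observation).
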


\begin{proof}
Let $\alpha$ be the constant from Lemma~\ref{lemma:leaf-balanced}.
Let $t \in \T_n$ such that $P_\sigma(t) > 0$. Lemma~\ref{lemma:leaf-balanced} implies 
$|\mathcal{D}_t|\leq  \alpha \cdot n/\log_2 n$. With Lemma~\ref{lemma-entropy-bound-dag} and
\ref{lemma-leaf-centric-weak-dom} we get
\begin{eqnarray*}
\frac{1}{n} \cdot ( |E_{\text{dag}}(t)| + \log_2( P_\sigma(t) ) ) & \leq &  O(|\D_t|/n) + O( |\D_t|/n \cdot \log_2 (n/|\D_t|)) \\
& \leq & O(\log \log n / \log n)
\end{eqnarray*}
This proves the corollary.
\end{proof}

\subsection{Depth-centric binary tree sources} \label{sec-depth-centric}

We recall the definition of a natural class of depth-centric tree sources from~\cite{ZhangYK14}:
Let $\Sigma_{\text{depth}}$ be the set of all mappings 
$\sigma:\mathbb{N}\times\mathbb{N}\to[0,1]$ such that for all $n\ge 1$:
\begin{equation} \label{eq-sigma-depth}
	\sum_{i,j\ge 0,\;\max(i,j)=n-1}\sigma(i,j)=1 .
\end{equation} 
For $\sigma \in \Sigma_{\text{depth}}$, we define $P_\sigma:\T \to [0,1]$ by 
\begin{eqnarray} \label{dcsource1}
P_\sigma(a) &=& 1, \\ \label{dcsource2}
P_\sigma(f(s,t)) &=& \sigma(d(s),d(t)) \cdot P_\sigma(s) \cdot P_\sigma(t) .
\end{eqnarray}
We have $\sum_{t\in\T^n}P_\sigma(t)=1$ and thus 
$( (\T^{i})_{i \geq 0}, P_\sigma)$ is a depth-centric tree source.

The following result is again implicitly shown in~\cite{ZhangYK14} and can also be found as a part of the proof of Theorem~\ref{theorem:dcts}.
\begin{lemma} \label{lemma-depth-centric-weak-dom}
For every $\sigma \in \Sigma_{\text{depth}}$, the  depth-centric tree source $( (\T_{i})_{i \geq 1}, P_\sigma)$
has the weak domination property.
\end{lemma}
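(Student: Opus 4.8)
The plan is to mimic the leaf-centric argument (Lemma~\ref{lemma-leaf-centric-weak-dom}) and exhibit an explicit dominating function $\lambda : \T \to \mathbb{R}_{>0}$. The natural candidate is to take $\lambda = P_\sigma$ itself, or a mild inflation thereof, and then verify the three conditions \eqref{wdom1}--\eqref{wdom4}. Condition~\eqref{wdom1}, $\lambda(t) \ge P_\sigma(t)$, is immediate if $\lambda = P_\sigma$ (or holds trivially after inflating). Condition~\eqref{wdom2}, submultiplicativity $\lambda(f(s,t)) \le \lambda(s)\cdot\lambda(t)$, follows directly from the recursive definition \eqref{dcsource2}: since $\sigma(d(s),d(t)) \in [0,1]$, we get $P_\sigma(f(s,t)) = \sigma(d(s),d(t))\cdot P_\sigma(s)\cdot P_\sigma(t) \le P_\sigma(s)\cdot P_\sigma(t)$. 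So the only real content is condition~\eqref{wdom4}: a polynomial bound $\sum_{t \in \T_n} \lambda(t) \le c_1 n^{c_2}$ on the total $\lambda$-mass of trees with $n$ \emph{leaves}.

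The obstacle is that $P_\sigma$ for a depth-centric source is normalized over trees of a fixed \emph{depth}, not a fixed number of leaves, so $\sum_{t\in\T_n}P_\sigma(t)$ need not equal $1$ and a priori could be large. The key observation to control it: a binary tree with $n$ leaves has depth at most $n-1$, hence $\T_n \subseteq \bigcup_{d=0}^{n-1}\T^d$, so
$$
\sum_{t\in\T_n}P_\sigma(t) \;\le\; \sum_{d=0}^{n-1}\;\sum_{t\in\T^d}P_\sigma(t) \;=\; \sum_{d=0}^{n-1} 1 \;=\; n,
$$
using the normalization $\sum_{t\in\T^d}P_\sigma(t)=1$ stated in the text. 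Thus $\lambda = P_\sigma$ already satisfies \eqref{wdom4} with $c_1 = c_2 = 1$. (If one wants $\lambda$ strictly positive on all of $\T$, as \eqref{wdom1} literally demands, note $P_\sigma(t)$ may be $0$; one remedy is to set $\lambda(t) = \max\{P_\sigma(t), 2^{-|t|}\}$, or simply $\lambda(t) = P_\sigma(t) + 2^{-2|t|}$, and check that $\sum_{t\in\T_n}2^{-2|t|} = |\T_n|\cdot 2^{-2n} \le 4^{-n}C_{n}$ is bounded by a constant since the Catalan number $|\T_n| = C_{n-1} \le 4^n$; submultiplicativity of the correction term $2^{-2|t|}$ is also clear, and products/sums of submultiplicative functions behave well enough, though one should double-check that the sum $\lambda(s)\lambda(t)$ dominates $\lambda(f(s,t))$ termwise.)

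So the write-up is short: state $\lambda := P_\sigma$ (or the slightly inflated version), verify \eqref{wdom1} and \eqref{wdom2} in one line each from \eqref{dcsource1}--\eqref{dcsource2}, and prove \eqref{wdom4} via the depth-at-most-$n-1$ containment displayed above together with the given normalization over each $\T^d$. The one point requiring a little care — and the place I would slow down — is reconciling the strict positivity required in \eqref{wdom1} with the fact that $P_\sigma$ can vanish; the cleanest fix is the additive perturbation by an exponentially small, manifestly submultiplicative term whose total mass over $\T_n$ stays bounded.
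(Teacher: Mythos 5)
Your proposal is correct and follows essentially the paper's route: the paper also inflates $P_\sigma$ by a strictly positive, submultiplicative floor term (it uses $\lambda(t)=\max\{1/C_{|t|},P_\sigma(t)\}$ with $C_{|t|}$ a Catalan number instead of your additive $2^{-2|t|}$), and its condition~(iii) argument is exactly your depth decomposition $\sum_{t\in\T_n}P_\sigma(t)\le\sum_{d=0}^{n-1}\sum_{t\in\T^d}P_\sigma(t)\le n$. One caution: of your two suggested fixes for strict positivity, only the additive $P_\sigma(t)+2^{-2|t|}$ (the one you actually verify) works; the alternative $\max\{P_\sigma(t),2^{-|t|}\}$ violates condition~(iii), since $\sum_{t\in\T_n}2^{-|t|}=|\T_n|\cdot 2^{-n}$ grows exponentially because $|\T_n|$ is of order $4^n$ up to polynomial factors.
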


We say the mapping $\sigma \in \Sigma_{\text{depth}}$ is {\em depth-balanced}
if there exists a constant $c$ such that
for all $(i,j) \in \mathbb{N}\times\mathbb{N}$ with $\sigma(i,j) > 0$ we have $|i-j| \leq c$. 
In~\cite{ZhangYK14}, the authors define a condition on $\sigma$ that is slightly stronger than depth-balancedness,
and show that for every such $\sigma$, the tree source $( (\T^{i})_{i \geq 0}, P_\sigma)$ has the representation ratio negligibility property.
Similarly to Lemma~\ref{lemma:leaf-balanced}, we will show an even stronger property.
To do so, we introduce {\em $\beta$-depth-balanced} trees for $\beta \in \N$.
A tree $t$ is called {\em $\beta$-depth-balanced} if for each 
subtree $f(t_1,t_2)$ of $t$ we have $|d(t_1)-d(t_2)|\le\beta$.
Note that for a depth-balanced mapping $\sigma \in \Sigma_{\text{depth}}$,
there is a constant $\beta$ such that every tree $t$ with $P_\sigma(t)>0$ is $\beta$-depth-balanced.
We will use the following lemma:

\begin{lemma}\label{lemma:beta-depth-balanced}
Let $\beta\in\mathbb{N}$ and $c = 1+1/(1+\beta)$ (thus, $1 < c \leq 2$). For every $\beta$-depth-balanced binary tree $t$, we have
$|t|\ge c^{d(t)}$.
\end{lemma}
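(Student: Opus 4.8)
The plan is to prove the bound $|t| \ge c^{d(t)}$ by induction on the structure of the tree $t$, where $c = 1 + 1/(1+\beta)$. The base case is $t = a$: here $|t| = 1$ and $d(t) = 0$, so $|t| = 1 = c^0 = c^{d(t)}$, which holds. For the inductive step, write $t = f(t_1, t_2)$ and assume without loss of generality that $d(t_1) \le d(t_2)$, so that $d(t) = d(t_2) + 1$. Since $t$ is $\beta$-depth-balanced, we have $d(t_2) - d(t_1) \le \beta$, i.e. $d(t_1) \ge d(t_2) - \beta = d(t) - 1 - \beta$. Both $t_1$ and $t_2$ are themselves $\beta$-depth-balanced (every subtree of a $\beta$-depth-balanced tree is $\beta$-depth-balanced), so the induction hypothesis gives $|t_1| \ge c^{d(t_1)}$ and $|t_2| \ge c^{d(t_2)}$.

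Now I would combine these. We have
\[
|t| = |t_1| + |t_2| \ge c^{d(t_1)} + c^{d(t_2)} \ge c^{d(t)-1-\beta} + c^{d(t)-1}.
\]
Factoring out $c^{d(t)-1-\beta}$, it suffices to show $c^{d(t)-1-\beta}\,(1 + c^{\beta}) \ge c^{d(t)}$, i.e. $1 + c^{\beta} \ge c^{\beta+1}$, i.e. $c^{\beta+1} - c^{\beta} \le 1$, i.e. $c^{\beta}(c-1) \le 1$. Since $c - 1 = 1/(1+\beta)$, this is $c^{\beta} \le 1 + \beta$. Finally, $c = 1 + 1/(1+\beta) \le 1 + 1/\beta$ for $\beta \ge 1$ (and the case $\beta = 0$ forces $t$ to be the perfect binary tree, where $|t| = 2^{d(t)} = c^{d(t)}$ since $c = 2$, so we may assume $\beta \ge 1$), and by Bernoulli-type estimates $(1 + 1/(1+\beta))^{\beta} \le e^{\beta/(1+\beta)} < e < 1 + \beta$ already for $\beta \ge 2$; the remaining small cases $\beta \in \{1\}$ are checked directly ($c = 3/2$, $c^1 = 3/2 \le 2$).

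**The main obstacle** is getting the inequality $c^{\beta}(c-1) \le 1$ to close cleanly with the stated value $c = 1 + 1/(1+\beta)$, rather than some smaller constant; the choice of exactly this $c$ is what makes the two terms $c^{d(t_1)}$ and $c^{d(t_2)}$ balance against $c^{d(t)}$ in the worst case where $d(t_1)$ is as small as the $\beta$-depth-balancedness permits. One should double check that the induction hypothesis is applied to strictly smaller trees (it is, since $t_1, t_2$ are proper subtrees of $t$) and that the edge cases where $t_1$ or $t_2$ equals $a$ are covered by the base case of the same induction. I expect the argument to be short once the algebraic inequality $c^{\beta} \le 1 + \beta$ is verified for the relevant range of $\beta$.
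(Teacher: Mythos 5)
Your proof is correct and follows essentially the same route as the paper: induction on the tree, using $\beta$-depth-balancedness to lower-bound the depth of the shallower subtree, and reducing to the algebraic inequality $1+c^{\beta}\ge c^{\beta+1}$ (equivalently the paper's $1+c^{-\beta}\ge c$). The only difference is cosmetic: the paper dismisses that final inequality with ``easily by induction on $\beta$,'' whereas you verify it explicitly via $c^{\beta}\le 1+\beta$ using $(1+1/(1+\beta))^{\beta}\le e^{\beta/(1+\beta)}<e$ together with the small cases $\beta\in\{0,1\}$.
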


\begin{proof}
We prove the lemma by induction on $d(t)$.
For the only tree $t=a$ of depth $d(t)=0$, we have $|t|=1=c^0$.
Consider now a $\beta$-depth-balanced tree $t=f(t_1,t_2)$ of depth $d(t)>0$. 
We assume $d(t_1)\ge d(t_2)$, the other case is symmetric.
Since $t$ is $\beta$-depth-balanced, it follows that $d(t_2)\ge d(t_1)-\beta$.
To estimate the size $|t|=|t_1|+|t_2|$, we apply the induction hypothesis to $t_1$ and $t_2$, which yields
$$
|t| = |t_1|+|t_2|\ge {c}^{d(t_1)}+c^{d(t_2)}\ge c^{d(t_1)}+{c}^{d(t_1)-\beta}= c^{d(t_1)}\cdot\left(1+c^{-\beta}\right).
$$
Since $d(t_1)+1=d(t)$, it only remains to show that $1+{c}^{-\beta} \ge c$, which can be easily done by induction on $\beta\in\mathbb{N}$.
\end{proof}
Lemma~\ref{lemma:beta-depth-balanced} together with results from \cite{HuLoNo17,Hubschle-Schneider15} implies:

\begin{lemma}\label{lemma:depth-balanced}
For every depth-balanced mapping $\sigma \in \Sigma_{\text{depth}}$ there 
exists a constant $\alpha$ such that for every binary tree $t\in \T_n$ with $P_\sigma(t)>0$
we have  $|\mathcal{D}_t|\leq  \alpha \cdot n \cdot \log_2 (\log_2 n) /\log_2 n$.
\end{lemma}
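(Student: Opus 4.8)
The plan is to reduce Lemma~\ref{lemma:depth-balanced} to the external bounds on minimal DAGs from \cite{HuLoNo17,Hubschle-Schneider15}, in the same spirit as the proof of Lemma~\ref{lemma:leaf-balanced}. Those results state that there is a universal constant $\alpha$ such that every binary tree $t\in\T_m$ of \emph{depth} $d(t) = O(\log m)$ has a minimal DAG of size at most $\alpha\cdot m\cdot\log_2(\log_2 m)/\log_2 m$; more precisely, I would quote the fact that for trees whose depth is logarithmically bounded in their size the minimal DAG is ``almost'' as small as in the leaf-balanced case, with only an extra $\log\log$ factor. So the whole task is to show that a tree $t\in\T_n$ with $P_\sigma(t)>0$, for depth-balanced $\sigma$, satisfies $d(t)\le O(\log n)$, and then plug this into the cited bound.

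First I would fix the constant $c$ from depth-balancedness of $\sigma$, i.e.\ $\sigma(i,j)>0$ implies $|i-j|\le c$, and set $\beta := c$, so that every tree $t$ with $P_\sigma(t)>0$ is $\beta$-depth-balanced (this observation is already recorded in the excerpt just before Lemma~\ref{lemma:beta-depth-balanced}). Then I apply Lemma~\ref{lemma:beta-depth-balanced} with this $\beta$: setting $c' := 1 + 1/(1+\beta) > 1$, every $\beta$-depth-balanced tree $t$ satisfies $|t|\ge (c')^{d(t)}$. Taking logarithms gives $d(t)\le \log_{c'}|t| = \log_2|t|/\log_2 c'$, i.e.\ $d(t)\le \gamma\cdot\log_2 n$ for $t\in\T_n$, where $\gamma := 1/\log_2 c'$ is a constant depending only on $\sigma$.

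Next I would invoke the DAG size bound from \cite{HuLoNo17,Hubschle-Schneider15} for trees of logarithmically bounded depth. Concretely, those papers show there is a constant $\alpha_0$ such that any $t\in\T_m$ with $d(t)\le \gamma\cdot\log_2 m$ has $|\mathcal{D}_t|\le \alpha_0\cdot m\cdot\log_2(\log_2 m)/\log_2 m$ (the constant $\alpha_0$ depends on $\gamma$). Applying this with $m=n$ and the bound $d(t)\le\gamma\log_2 n$ established above yields $|\mathcal{D}_t|\le \alpha_0\cdot n\cdot\log_2(\log_2 n)/\log_2 n$, and setting $\alpha := \alpha_0$ finishes the proof. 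For very small $n$ (so that $\log_2\log_2 n$ is not well-defined or the bound is vacuous) one absorbs finitely many trees into the constant $\alpha$, exactly as is implicitly done throughout the paper.

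The only genuinely delicate point is making sure the cited bound from \cite{HuLoNo17,Hubschle-Schneider15} is stated in a form that takes the depth hypothesis as input rather than some stronger balancedness condition; in the leaf-centric case Lemma~\ref{lemma:leaf-balanced} went through $\beta$-balancedness of the tree, whereas here the natural hypothesis one extracts from depth-balancedness is a bound on the \emph{depth} relative to the size, not a $\beta$-balancedness of the subtree-size splits. So the main obstacle is bookkeeping: pinning down which of the two possible statements in \cite{HuLoNo17,Hubschle-Schneider15} applies and confirming that ``depth $O(\log n)$'' suffices for the $O(n\log\log n/\log n)$ DAG bound. Everything else is the short, self-contained counting argument via Lemma~\ref{lemma:beta-depth-balanced} sketched above.
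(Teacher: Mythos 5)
Your overall route (deduce a depth bound from Lemma~\ref{lemma:beta-depth-balanced} and then quote the DAG bound of \cite{HuLoNo17,Hubschle-Schneider15}) is the same as the paper's, but there is a genuine gap at exactly the point you flag and then dismiss as ``bookkeeping''. You apply Lemma~\ref{lemma:beta-depth-balanced} only to the whole tree $t$, obtaining $d(t)\le\gamma\cdot\log_2 n$, and then invoke an external statement of the form ``every $t\in\T_m$ with $d(t)=O(\log m)$ satisfies $|\D_t|=O(m\log_2\log_2 m/\log_2 m)$''. No such theorem can hold: a tree of depth $O(\log n)$ can have a minimal DAG of size $\Omega(n)$. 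For instance, let $m=\lceil\log_2 n\rceil$, pick $n/(2m)$ pairwise distinct binary trees $s_1,\dots,s_{n/(2m)}$ with $m$ leaves each (the number of such trees is roughly $4^m\approx n^2$, more than enough), put on top of each $s_i$ a comb-shaped spine of $m$ further internal nodes (each with an $a$-leaf as second child), and attach the resulting trees of size $2m$ to the leaves of a complete binary tree with $n/(2m)$ leaves. The resulting tree has $n$ leaves and depth at most $3\log_2 n+O(1)$, yet the $m$ spine subtrees above each $s_i$ are pairwise distinct and distinct across different $i$ (each determines $s_i$), giving at least $n/2$ distinct subtrees, i.e.\ $|\D_t|\ge n/2$. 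So the hypothesis actually needed for the cited bound --- and this is what \cite[Theorem~12]{HuLoNo17} uses --- is that \emph{every subtree} $t'$ of $t$ has depth $O(\log_2|t'|)$ measured against its own size, not a depth bound on $t$ alone; the subtrees of size $2m$ in the example above violate this, having depth about $\log n$ but size only $2\log n$.

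The repair is short and is exactly what the paper does. If $P_\sigma(t)>0$, then every subtree $f(t_1,t_2)$ of $t$ contributes the factor $\sigma(d(t_1),d(t_2))>0$ to the product defining $P_\sigma(t)$, so $|d(t_1)-d(t_2)|\le c$ holds at every internal node; that is, $t$ is $\beta$-depth-balanced with $\beta=c$, and since $\beta$-depth-balancedness is a condition on all subtrees, it is inherited by every subtree $t'$ of $t$. Applying Lemma~\ref{lemma:beta-depth-balanced} to each such $t'$ gives $d(t')\le\log_2|t'|/\log_2 c'$ with $c'=1+1/(1+\beta)$, uniformly over all subtrees, and this is the hypothesis under which \cite[Theorem~12]{HuLoNo17} (implicit in \cite{Hubschle-Schneider15}) yields $|\D_t|\le\alpha\cdot n\cdot\log_2(\log_2 n)/\log_2 n$ with $\alpha$ depending only on $\sigma$. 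Your computation of the global depth bound is correct but by itself insufficient; the all-subtrees version is the missing (and essential) step.
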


\begin{proof}
If $P_\sigma(t) > 0$, then there exists a constant $\beta$ such that $t$ and each of its subtrees is $\beta$-depth-balanced.
By Lemma~\ref{lemma:beta-depth-balanced} this implies that every subtree $t'$ has depth at most $c \cdot \log_2 |t'|$ for a constant
$c$ that only depends on $\sigma$. By \cite[Theorem~12]{HuLoNo17} 
(which was implicitly shown in \cite{Hubschle-Schneider15}) it follows that there exists a constant $\alpha$ (again, only dependent on $\sigma$)
such that $|\mathcal{D}_t|\leq  \alpha \cdot n \cdot \log_2 (\log_2 n) /\log_2 n$.
\end{proof}

\begin{corollary}
Let $\sigma \in \Sigma_{\text{depth}}$ be a depth-balanced mapping and let
$\mathcal{S} = ( (\T^{i})_{i \geq 0}, P_\sigma)$ be the corresponding  leaf-centric tree source.
Then, we have $R(E_{\text{dag}}, \mathcal{S},  i) \leq O( (\log \log i)^2 / \log i)$.
\end{corollary}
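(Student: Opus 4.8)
The plan is to reproduce, in the depth-centric setting, the argument used for the preceding corollary on leaf-centric sources, now invoking the depth-centric analogues of the ingredients. By Lemma~\ref{lemma-depth-centric-weak-dom} the source $\mathcal{S}$ has the weak domination property, so for every $t \in \T^i$ with $P_\sigma(t) > 0$ and $|t| \geq 2$ (i.e.\ for every admissible $t$ once $i \geq 1$) Lemma~\ref{lemma-entropy-bound-dag} applies. Writing $n = |t|$ and $r = |\D_t|/n$, it gives
\[
\frac{1}{n}\bigl(|E_{\text{dag}}(t)| + \log_2 P_\sigma(t)\bigr) \leq O(r) + O\bigl(r \cdot \log_2(1/r)\bigr),
\]
and by Lemma~\ref{lemma:depth-balanced} there is a constant $\alpha$, depending only on $\sigma$, with $r \leq \alpha \log_2\log_2 n/\log_2 n$. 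First I would note that this makes $r$ itself $O(\log\log n/\log n) \subseteq O((\log\log n)^2/\log n)$, which already handles the first summand.

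For the second summand I would exploit that $x \mapsto x\log_2(1/x)$ is increasing on $(0,1/e)$ and that $r \to 0$ as $n \to \infty$ (which follows from $n \geq c^i$, established below); evaluating this function at the upper bound $\alpha\log_2\log_2 n/\log_2 n$ for $r$ and using $\log_2\bigl(\log_2 n/(\alpha\log_2\log_2 n)\bigr) \leq \log_2\log_2 n$ for large $n$, I obtain $r\log_2(1/r) = O((\log\log n)^2/\log n)$. Hence the per-leaf redundancy of any admissible $t$ of size $n$ is $O((\log\log n)^2/\log n)$, with the implied constant depending only on $\sigma$.

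It remains to re-express this bound in terms of the partition index $i = d(t)$. Since $P_\sigma(t) > 0$, the tree $t$ (and each of its subtrees) is $\beta$-depth-balanced for a constant $\beta$ depending only on $\sigma$, so Lemma~\ref{lemma:beta-depth-balanced} yields $n = |t| \geq c^{i}$ with $c = 1 + 1/(1+\beta) > 1$; combined with the trivial bound $n \leq 2^{i}$ this gives $\log_2 n = \Theta(i)$, with constants depending only on $\sigma$. As the function $(\log\log n)^2/\log n$ is eventually decreasing in $n$, its maximum over all admissible $t \in \T^i$, i.e.\ over $n$ ranging in $[c^i, 2^i]$, is attained at the smallest such $n$ and equals $O\bigl((\log i)^2/i\bigr)$ after substituting $n = c^i$. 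Therefore $R(E_{\text{dag}},\mathcal{S},i) \leq O((\log i)^2/i) \leq O((\log\log i)^2/\log i)$, which proves the claim (in fact with room to spare). The only steps requiring any care are the monotonicity argument for the $r\log_2(1/r)$ term and the passage from $n$ to $i$ via Lemma~\ref{lemma:beta-depth-balanced}; neither presents a genuine obstacle, since the substantive work is already contained in Lemmas~\ref{lemma-entropy-bound-dag} and~\ref{lemma:depth-balanced}.
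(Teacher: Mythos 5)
Your argument is correct, and it follows the paper's overall scheme (Lemma~\ref{lemma-depth-centric-weak-dom} plus Lemma~\ref{lemma-entropy-bound-dag} to reduce to the ratio $|\D_t|/|t|$, Lemma~\ref{lemma:depth-balanced} to bound that ratio, and monotonicity of $g(x)=x\log_2(1/x)$ on $(0,1/e)$), but it deviates in the final step of converting from the size $n=|t|$ to the index $i=d(t)$, and this deviation actually buys you more. The paper only uses the trivial lower bound $|t|\ge i+1$ for trees of depth $i$, so after substitution it obtains exactly the stated $O((\log\log i)^2/\log i)$. You instead invoke Lemma~\ref{lemma:beta-depth-balanced}: since $P_\sigma(t)>0$ and $\sigma$ is depth-balanced, $t$ is $\beta$-depth-balanced for a $\beta$ depending only on $\sigma$, whence $|t|\ge c^i$ with $c=1+1/(1+\beta)>1$, and together with $|t|\le 2^i$ this gives $\log_2|t|=\Theta(i)$ and the sharper bound $R(E_{\text{dag}},\mathcal{S},i)\le O((\log i)^2/i)$, which indeed implies the claimed $O((\log\log i)^2/\log i)$ since $(\log i)^3=o\bigl(i(\log\log i)^2\bigr)$. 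All the auxiliary steps you flag are sound: the first summand $O(|\D_t|/|t|)$ is handled directly by Lemma~\ref{lemma:depth-balanced}, the monotonicity argument for the $g$-term is legitimate once $i$ is large enough that $\alpha\log_2\log_2 n/\log_2 n\le 1/e$ uniformly over admissible $t\in\T^i$ (guaranteed by $n\ge c^i$, or even by $n\ge i+1$), and the exponential size lower bound is exactly the observation the paper makes right before Lemma~\ref{lemma:beta-depth-balanced}. So your proof is valid and in fact establishes a strictly stronger convergence rate than the corollary states; the paper's version of the last step is coarser but suffices for the statement as given.
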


\begin{proof}
Let $\alpha$ be the constant from Lemma~\ref{lemma:depth-balanced}.
Let $t \in \T^i$ such that $P_\sigma(t) > 0$. 
Lemma~\ref{lemma-entropy-bound-dag} and \ref{lemma-depth-centric-weak-dom} imply
$$
\frac{1}{|t|} \cdot ( |E_{\text{dag}}(t)| + \log_2( P_\sigma(t) ) ) \leq  O(|\D_t|/|t|) + O( |\D_t|/|t| \cdot \log_2 (|t|/|\D_t|)) .
$$
Consider the mapping $g(x) = x \cdot \log_2(1/x)$. It is monotonically increasing for $0 \leq x \leq 1/e$.
Note that for all $t \in \T^i$ we have $|t| \geq i+1$. 
Hence, if $i$ is large enough, then Lemma~\ref{lemma:depth-balanced} yields for all $t \in \T^i$ with $P_\sigma(t) > 0$ that
$$
|\D_t|/|t| \leq  \alpha \cdot \log_2 (\log_2 |t|) /\log_2 |t| \leq \log_2 (\log_2 (i+1)) /\log_2 (i+1) \leq 1/e.
$$
We obtain
$$
\frac{1}{|t|} \cdot ( |E_{\text{dag}}(t)| + \log_2( P_\sigma(t) ) ) \leq O((\log \log i)^2 / \log i) .
$$
This proves the corollary.
\end{proof}

\section{Tree compression with TSLPs}

In this section, we will use general TSLPs for the compression of binary trees.
The limitations of DAGs for universal source coding can be best seen for a tree
source $( (\mathcal{F}_i)_{i \in \mathbb{N}}, P)$ such that $P(t) > 0$ for all $t \in \T$.
Example~\ref{example-dag} shows that for every $n \geq 1$, there is a tree $t \in \T_n$ with
$|\mathcal{D}_t| = n$.  In that case, the bound stated in Lemma~\ref{lemma-entropy-bound-dag} cannot
be used to show that the worst-case redundancy converges to zero.

\subsection{TSLPs in normal form} \label{sec-normal-form}

In this section, we will use TSLPs in a certain normal form, which we introduce first.

A TSLP $\G = (V, A_0, r)$ is in {\em normal form} if the following conditions hold:
\begin{itemize}
\item $V = \{ A_0, A_1, \ldots, A_{n-1}\}$ for some $n \in \N$, $n \geq 1$.
\item For every $A_i \in V_0$, 
the right-hand side $r(A_i)$ is a term of the form 
$A_j(\alpha)$, where $A_j \in V_1$ and $\alpha \in V_0 \cup \{a\}$.
\item For every $A_i \in V_1$ 
the right-hand side $r(A_i)$ is a term of the form 
$A_j(A_k(x))$, $f(\alpha,x)$, or $f(x,\alpha)$,
where $A_j,A_k \in V_1$ and $\alpha \in V_0 \cup \{a\}$.  
\item For every $A_i \in V$ define the word $\rho(A_i) \in (V \cup \{a\})^*$ as follows:
$$
\rho(A_i) = \begin{cases}
A_j \alpha & \text{ if } r(A_i) = A_j(\alpha) \\
A_j A_k & \text{ if } r(A_i) = A_j(A_k(x)) \\
\alpha & \text{ if } r(A_i) = f(\alpha,x) \text{ or } f(x,\alpha)
\end{cases}
$$
Let $\rho_{\G} = \rho(A_0) \rho(A_1) \cdots \rho(A_{n-1}) \in \{a,A_1,A_2,\ldots, A_{n-1}\}^*$. 
Then we require that $\rho_{\G}$ is of the form 
$\rho_{\G} = A_1 u_1 A_2 u_2 \cdots A_{n-1}  u_{n-1}$ 
with $u_i \in \{a, A_1,  A_2,\ldots, A_i \}^*$.
\item $\val_{\G}(A_i) \neq \val_{\G}(A_j)$ for $i \neq j$
\end{itemize}
As for DAGs we also allow the TSLP $\G_a = (\{A_0\}, A_0, A_0 \mapsto a)$ in order to get the singleton tree $a$.
In this case, we set $\rho_{\G_a} = \rho(A_0) =  a$.

Let $\G = (V,A_0,r)$ be a TSLP in normal form with $V = \{ A_0, A_1, \ldots, A_{n-1}\}$ for the further definitions.
We define the size of $\G$ as $|\G| = |\rho_{\G}|$.
This is the total number of occurrences of symbols
from $V \cup \{a\}$ in all right-hand sides of $\G$.
Let $\omega_{\G}$ be the word obtained from $\rho_{\G}$ by removing for every $1\le i \le n-1$ the first
occurrence of $A_i$ from $\rho_{\G}$. Thus, if $\rho_{\G} = A_1 u_1 A_2 u_2 \cdots A_{n-1}  u_{n-1}$ 
with $u_i \in \{a, A_1,  A_2,\ldots, A_i \}^*$, then $\omega_{\G} = u_1 u_2 \cdots u_{n-1}$.
The {\em entropy} $H(\G)$ of the normal form TSLP $\G$ is defined as the empirical unnormalized entropy of the word $\omega_{\G}$:
$$
H(\G) = H(\omega_{\G}) .
$$
\begin{example}\label{example:TSLPnormalform}
Let $\G=(\{A_0,A_1,A_2,A_3,A_4\},A_0,r)$ be the normal form TSLP with $A_0,A_2,A_3 \in V_0, A_1,A_4 \in V_1$ and
\begin{eqnarray*}
&&r(A_0)=A_1(A_2),\; r(A_1) = f(x,A_3), \; r(A_2) = A_4(A_3),\\
&&r(A_3)=A_4(a), \; r(A_4)=f(x,a).
\end{eqnarray*}
We have $\val(\G)=f( f(f(a,a), a), f(a,a))$, $\rho_{\G}=A_1A_2A_3A_4A_3A_4aa$ ($u_1=u_2=u_3=\varepsilon$, $u_4=A_3A_4aa$), $|\G|=8$ and $\omega_\G=A_3A_4aa$. 
\end{example}
The {\em derivation tree} $T_{\G}$ of $\G$ is a rooted tree, where every node is labelled with a symbol
from $V \cup \{a\}$.
The root is labelled with $A_0$. 
Nodes labelled with $a$ are the leaves of $\T_G$.
A node $v$ that is labelled with a nonterminal $A_i$ has $|\rho(A_i)|$ many children.
If $\rho(A_i) = \alpha \in V_0 \cup \{a\}$ then the single child of $v$ is labelled with $\alpha$.
If $\rho(A_i) = A_j \alpha$ with $\alpha \in V \cup \{a\}$ then the left (resp., right) child
of $v$ is labelled with $A_j$ (resp., $\alpha$). 
An {\em initial subtree} of $T_{\G}$ is a tree that can be obtained from $T_{\G}$ as follows:
Take a subset $U$ of the nodes of $T_{\G}$ and remove from $T_{\G}$ all proper descendants of nodes from $U$, i.e., all nodes that are located strictly below a node from $U$. 
For such an initial subtree we denote with $\val_{\G}(T') \in \T_V$ the tree that is derived from the start nonterminal 
$A_0$ by using the rules according to $T'$. Formally, it is defined by assigning to every node $v$ of $T'$ a tree or context
$t_v \in \T_V \cup \cT_V$ inductively as follows:
\begin{itemize}
\item If $v$ is labelled with $a$ then $t_v = a$.
\item If $v$ is labelled with $A_i \in V_0$ and is a leaf of $T'$, then $t_v = A_i$.
\item If $v$ is labelled with $A_i \in V_1$ and is a leaf of $T'$, then $t_v = A_i(x)$.
\item If $v$ is labelled with $A_i \in V_1$ and has a single child node $u$, then $t_v = r(A_i)(t_u)$
(note that $r(A_i)$ must be of the form $f(x,\alpha)$ or $f(\alpha,x)$).
\item If $v$ has the left child $u_1$ and the right child $u_2$, then $t_v = t_{u_1}(t_{u_2})$.
\end{itemize}
We finally set $\val_{\G}(T') = t_{v_0}$, where $v_0$ is the root node of $T'$.

\begin{example} \label{example-derivation-tree}
Let $\G$ be the normal form TSLP from Example~\ref{example:TSLPnormalform}.
The derivation tree $T_{\G}$ is shown in Figure~\ref{fig-derivation-tree} on the left; an initial subtree $T'$ of it is shown on the right.
We have $\val_{\G}(T') =f(A_4(A_3), f(a,a))$.
\end{example}

\begin{figure}[t]
		\tikzset{level 1/.style={sibling distance=24mm}}
		\tikzset{level 2/.style={sibling distance=16mm}}
		\tikzset{level 7/.style={sibling distance=8mm}}  
		\tikzset{level 8/.style={sibling distance=4mm}}
		\hspace*{\fill}
		\begin{tikzpicture}[scale=1,auto,swap,level distance=8mm]
		\node (eps) {$A_0$} 
		child {node {$A_1$}
			child {node {$A_3$}
				child {node {$A_4$}
					child {node{$a$}}
				}
				child {node {$a$}}
			}
		}
		child {node {$A_2$}
			child {node {$A_4$}
				child {node {$a$}}
			}
			child {node {$A_3$}
				child {node {$A_4$}
					child {node{$a$}}
				}
				child {node {$a$}}
			}
		}
		;
		{label fig one};
		\end{tikzpicture}
		\hspace*{\fill}
		\begin{tikzpicture}[scale=1,auto,swap,level distance=8mm]
		\node (eps) {$A_0$} 
		child {node {$A_1$}
			child {node {$A_3$}
				child {node {$A_4$}
					child {node{$a$}}
				}
				child {node {$a$}}
			}
		}
		child {node {$A_2$}
			child {node {$A_4$}
			}
			child {node {$A_3$}
			}
		}
		;
		{label fig one};
		\end{tikzpicture}
		\hspace*{\fill}
		\caption{The derivation tree $T_\G$ of the TSLP from Example~\ref{example-derivation-tree} (left) and an initial subtree $T'$ of $T_\G$ (right).}
		\label{fig-derivation-tree}
	\end{figure}
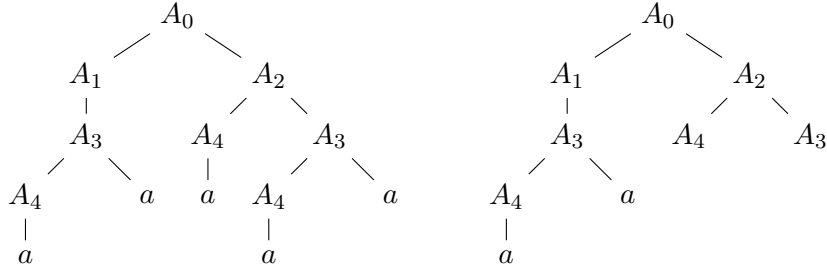

A {\em grammar-based tree compressor} is an algorithm $\psi$ that produces for a given tree 
$t \in \T$ a TSLP $\mathcal{G}_t$ in normal form. The {\em compression ratio} of $\psi$ is 
the mapping $n \mapsto \gamma_\psi(n)$ with
$$
\gamma_\psi(n) = \max_{t \in \T_{n}} |\mathcal{G}_t|/n.
$$
It is not hard to show that every TSLP can be transformed with a linear size increase into a normal form
TSLP that derives the same tree. For example, the TSLP from Example~\ref{example:TSLP} is transformed into the normal form TSLP described in Example~\ref{example:TSLPnormalform}.
We will not use this fact, since all we need is the following theorem from  \cite{HuLoNo17}:

\begin{theorem} \label{thm-compression-ratio}
There exists a grammar-based compressor $\psi$ (working in linear  time) with 
 $\gamma_\psi(n)  \in O(1 / \log n)$.
\end{theorem}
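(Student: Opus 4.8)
The approach I would take is the cluster‑decomposition (``four Russians'') argument that underlies \cite{HuLoNo17}. Fix $t\in\T_n$, so $t$ has $2n-1$ nodes, and set $k=\lceil\frac14\log_2 n\rceil$. The first step is to partition the nodes of $t$ into $O(n/k)$ pairwise vertex‑disjoint \emph{fragments}, each of which is (the node set of) a tree in $\T_{\le k}$ or a context in $\cT_{\le k}$, together with a \emph{skeleton} $S$: a tree with one node per fragment, whose edges record how a fragment plugs into a hole of its parent fragment. Morally this is obtained by a bottom‑up traversal that cuts the edge above a node as soon as the part of its subtree accumulated since the last cut reaches size $\Theta(k)$; the one delicate point is that a cut at a binary node can leave a fragment with two holes, which is avoided exactly as in the construction of top trees (alternately contracting vertical pairs and sibling pairs of current clusters), so that every fragment stays a genuine tree or one‑hole context while the number of fragments still drops to $O(n/k)$. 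Since $k=\Theta(\log n)$, the skeleton $S$ has $O(n/\log n)$ nodes, and the whole decomposition runs in linear time.

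The second step exploits that there are few \emph{possible} fragments. By the Catalan bound, the number of binary trees with at most $k$ leaves is $O(4^k)$ and the number of contexts with at most $k$ non‑parameter leaves is $O(k\cdot 4^k)$; with $k=\lceil\frac14\log_2 n\rceil$ both are $O(\sqrt n\cdot\log n)$. Hence, although $t$ is cut into $\Theta(n/\log n)$ fragments, at most $O(\sqrt n\log n)$ of them are pairwise distinct, and equal fragments can be identified in linear time by radix‑sorting their canonical string encodings (of total length $O(n)$). Now assemble a TSLP: introduce a nonterminal $B_F$ for each distinct fragment $F$, whose right‑hand side spells out $F$ verbatim (size $O(k)$, with the hole of a context fragment represented by the parameter $x$); and a nonterminal $C_v$ for each skeleton node $v$, whose right‑hand side applies the matching $B_F$ to the nonterminals $C_{v_1},\dots$ of the children of $v$ in $S$ (size $O(1)$). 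Taking $C_{\mathrm{root}}$ as start nonterminal gives a TSLP with $\val$ equal to $t$ and of size $O(k)\cdot O(\sqrt n\log n)+O(1)\cdot O(n/\log n)=O(n/\log n)$, using $\sqrt n\log^2 n=o(n/\log n)$.

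It remains to turn this TSLP into normal form; as noted just before Theorem~\ref{thm-compression-ratio}, this costs only a constant factor in size and can be done in linear time, so $\psi$ works in linear time with $\gamma_\psi(n)=O(1/\log n)$. The step I expect to be the real obstacle is the first one: arranging the cluster decomposition so that it simultaneously (a) yields only $O(n/\log n)$ fragments, (b) guarantees that every fragment is an honest element of $\T_{\le k}\cup\cT_{\le k}$ rather than a pattern with several holes, and (c) runs in linear time — and then checking that the resulting grammar can be brought into the normal form of Section~\ref{sec-normal-form} without spoiling the size bound. Each of (a)--(c) is standard in the top‑tree literature, so alternatively one may simply invoke the linear‑time construction of \cite{HuLoNo17} directly; everything after the decomposition is the counting bound above.
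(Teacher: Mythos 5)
You flag the right step as the obstacle, but the gap there is not merely ``delicate'' --- the decomposition you describe does not exist. A fragment that is a tree or a one-hole context has at most one downward dangling edge, so it is of the form $t[w]$ or $t[w]\setminus t[h]$ for its topmost node $w$ and at most one descendant $h$. For the complete binary tree, $|t[h]|\le |t[w]|/2$, so any such fragment containing $w$ has at least $|t[w]|$ nodes; since the root must lie in some fragment, every vertex-partition into trees/one-hole contexts has a fragment of size $\Omega(n)$, not $O(\log n)$. Put differently: a tree fragment has no child in your skeleton and a context fragment has exactly one, so the skeleton would have to be a path, which only caterpillar-like trees admit. The appeal to top trees does not repair this: top trees yield a hierarchical, multi-scale cluster structure rather than a flat partition into one-hole pieces of size $\Theta(\log n)$, and the sharing bound known for that structure is $O(n\log\log n/\log n)$ --- exactly the weaker bound this paper uses in Lemma~\ref{lemma:depth-balanced} via \cite{Hubschle-Schneider15} --- not the $O(n/\log n)$ needed here. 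Your second step (the Catalan count $O(\sqrt{n}\,\mathrm{polylog}\,n)$ of distinct small pieces, hence $o(n/\log n)$ for their right-hand sides) is fine, but it has nothing to attach to once step 1 fails.

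Note also that the paper does not prove Theorem~\ref{thm-compression-ratio} at all; it imports it verbatim from \cite{HuLoNo17}, so your fallback of ``just invoke \cite{HuLoNo17}'' is literally the paper's route. What makes the construction in \cite{HuLoNo17} work, and what a self-contained proof would need in place of your flat partition, is a \emph{weight-balanced hierarchical} decomposition: split recursively at a $1/3$--$2/3$ separator, with an extra immediate re-split whenever a piece would acquire a second hole, so that every piece of the hierarchy is a tree or one-hole context; the balance guarantees only $O(n/\log n)$ pieces of size exceeding $\Theta(\log n)$, and the pieces of size at most $\tfrac14\log_2 n$ are then shared by your counting argument. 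Additional effort (a linear-time shrinking preprocessing) is needed to get linear running time, and when normalizing you must also merge nonterminals with equal value to satisfy the condition $\val_{\G}(A_i)\neq\val_{\G}(A_j)$. So either cite the theorem as the paper does, or replace the flat decomposition by the balanced hierarchical one; as written, your step 1 is impossible rather than standard.
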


\subsection{Binary coding of TSLPs in normal form} \label{sec-binary-coding}

In this section we fix a binary encoding for normal form TSLPs. This encoding
is similar to the one for SLPs \cite{KiYa00} and DAGs \cite{ZhangYK14}.
Let $\G = (V,A_0,r)$ be a TSLP in normal form with $n = |V|$ nonterminals.
Let $m = |\G| = |\rho_{\G}|$ be the size of $\G$.  
We define the type $\type(A_i) \in \{0,1,2,3\}$ of a nonterminal $A_i \in V$ 
as follows: 
$$
\type(A_i) = \begin{cases}
0 & \text{ if } \rho(A_i) \in V_1 (V_0 \cup \{a\}) \\
1 & \text{ if } \rho(A_i) \in V_1 V_1 \\
2 & \text{ if } \rho(A_i) = f(\alpha,x) \text{ for some } \alpha \in V_0 \cup \{a\} \\
3 & \text{ if } \rho(A_i) = f(x,\alpha) \text{ for some } \alpha \in V_0 \cup \{a\} 
\end{cases}
$$
We define the binary word $B(\G) = w_0 w_1 w_2 w_3 w_4$, where the words $w_i  \in \{0,1\}^+$, $0\le i \le 4$, are defined as follows:
\begin{itemize}
\item $w_0 = 0^{n-1}1$,
\item $w_1 = a_0 b_0 a_1 b_1 \cdots a_{n-1} b_{n-1}$, where $a_j b_j$ 
is the 2-bit binary encoding of $\type(A_j)$,
\item Let $\rho_{\G} = A_1 u_1 A_2 u_2 \cdots A_{n-1}  u_{n-1}$ with $u_i \in \{a, A_1,  A_2,\ldots, A_i \}^*$.
Then $w_2 = 1 0^{|u_1|} 1 0^{|u_2|} \cdots 1 0^{|u_{n-1}|}$. 
\item For $1\le i \le n-1$ let $k_i = |\rho_{\G}|_{A_i} \geq 1$ be the number of occurrences of the nonterminal $A_i$
in the word $\rho_{\G}$. Then $w_3 = 0^{k_1-1} 1 0^{k_2-1} 1 \cdots 0^{k_{n-1}-1} 1$. 
\item The word $w_4$ encodes the word $\omega_{\G}$ using the well-known enumerative encoding \cite{Cover73}.
Every nonterminal $A_i$, $1\le i\le n-1$, has $\eta(A_i) := k_i-1$ occurrences in $\omega_{\G}$.  The 
symbol $a$ has $\eta(a) := m - (k_1+ \cdots + k_{n-1})$ many occurrences in $\omega_{\G}$.
Let $S$ be the set of words over the alphabet $\{a,A_1,\ldots,A_{n-1}\}$ with $\eta(a)$ occurrences of $a$ and 
$\eta(A_i)$ occurrences of $A_i$ for every $1\le i \le n-1$. Hence, 
\begin{equation} \label{size-S}
|S| = \frac{(m-n+1)!}{ \eta(a)! \cdot \prod_{i=1}^{n-1} \eta(A_i)!} .
\end{equation}
Let $v_0, v_1, \ldots, v_{|S|-1}$ be the lexicographic enumeration of the words from $S$ with respect to 
the alphabet order $a, A_1,\ldots,A_{n-1}$.
Then $w_4$ is the binary encoding of the unique index $i$ such that $\omega_{\G} = v_i$, where
$|w_4| = \lceil \log_2 |S| \rceil$ (leading zeros are added to the binary encoding of $i$ to obtain the length
$ \lceil \log_2 |S| \rceil$).
\end{itemize}

\begin{example}
Consider the normal from TSLP $\G$ from Example~\ref{example:TSLPnormalform}.
We have
$w_0=00001$,
$w_1 = 00 11 00 00 11$,
$w_2=11110000$ and
$w_3=110101$.
To compute $w_4$, note first that there are $|S|=12$ words with two occurrences of $a$ and one occurrence of $A_3$ and $A_4$.
It follows that $|w_4|=\lceil\log_2(12)\rceil=4$.
Further, since the order of the alphabet is $a,A_3,A_4$,
there are only three words in $S$ ($A_4A_3aa$, $A_4aA_3a$ and $A_4aaA_3$), which are lexicographically larger than $\omega_\G=A_3A_4aa$.
Hence, $\omega_\G=v_{8}$ and thus $w_4=1000$.
\end{example}

\begin{lemma}
The set of code words $B(\G)$, where $\G$ ranges over all TSLPs in normal form, is a prefix code.
\end{lemma}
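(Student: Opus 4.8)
<br>

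The plan is to show that $B$ is an injective map whose image is prefix-free; equivalently, that from any binary string $y$ that has some $B(\G)$ as a prefix, one can uniquely recover $\G$ together with the exact length $|B(\G)|$, so that no other codeword can also be a prefix of $y$. The structure of the argument mirrors the prefix-code proofs for SLPs in \cite{KiYa00} and DAGs in \cite{ZhangYK14}: we decode the five blocks $w_0, w_1, w_2, w_3, w_4$ left to right, and argue that each block has a self-delimiting format given the information decoded so far.

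First I would handle the degenerate case $\G = \G_a$ separately: here the whole codeword consists only of $w_0 = 1$ (since $n=1$), and the subsequent blocks are empty by the conventions ($\rho_{\G_a}=a$, so $u_i, k_i$ ranges are empty, and $\omega_{\G_a}=a$ forces $|S|=1$, $|w_4|=0$). Then I would treat $n\ge 2$: the block $w_0 = 0^{n-1}1$ is read up to and including its first $1$, which immediately yields $n = |V|$. Knowing $n$, the block $w_1$ has fixed length $2n$ and decodes the types $\type(A_0),\dots,\type(A_{n-1})\in\{0,1,2,3\}$; in particular this tells us for each $i$ whether $A_i\in V_0$ or $A_i\in V_1$, hence the partition $V = V_0\sqcup V_1$, and also $|\rho(A_i)|$ (which is $2$ for types $0,1$ and $1$ for types $2,3$). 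Summing gives $m = |\G| = |\rho_{\G}| = \sum_{i=0}^{n-1}|\rho(A_i)|$, so from $w_0w_1$ alone we recover both $n$ and $m$. The block $w_2 = 1 0^{|u_1|} 1 0^{|u_2|}\cdots 1 0^{|u_{n-1}|}$ has exactly $n-1$ ones, so it is read until the $(n-1)$-st $1$ is seen, recovering $|u_1|,\dots,|u_{n-1}|$; since $|\rho_{\G}| = (n-1) + \sum_i |u_i|$ and this must equal the already known $m$, this is a consistency check and also determines the positions of the $A_i$'s within $\rho_{\G}$. The block $w_3 = 0^{k_1-1}1\cdots 0^{k_{n-1}-1}1$ likewise has exactly $n-1$ ones, so it is read until the $(n-1)$-st $1$, recovering $k_1,\dots,k_{n-1}$, hence $\eta(A_i) = k_i-1$ and $\eta(a) = m-\sum_i k_i$.

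At this point every quantity needed to describe $w_4$ is known: $S$ is determined, $|S|$ is given by the explicit formula \eqref{size-S}, and therefore $|w_4| = \lceil \log_2 |S|\rceil$ is a known nonnegative integer. The decoder reads exactly the next $|w_4|$ bits, interprets them as an index $i < |S|$, and recovers $\omega_{\G} = v_i$ from the enumerative coding \cite{Cover73}. Combining $\omega_{\G}$ with the positions of the $A_i$'s recorded via $w_2$ (which, together with the fact that the first occurrence of each $A_i$ was removed to form $\omega_{\G}$, lets us reinsert them and reconstruct $\rho_{\G} = A_1 u_1 \cdots A_{n-1} u_{n-1}$) and with the types from $w_1$, we reconstruct every right-hand side $r(A_i)$: the type tells us the shape ($A_j(\alpha)$, $A_j(A_k(x))$, $f(\alpha,x)$, or $f(x,\alpha)$) and the corresponding factor of $\rho_{\G}$ supplies the nonterminal names. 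Thus $\G$ is uniquely recovered, and moreover the total consumed length $|w_0|+|w_1|+|w_2|+|w_3|+|w_4|$ was determined step by step, so no strict prefix or strict extension of $B(\G)$ can equal another codeword $B(\G')$. I expect the only mildly delicate point to be bookkeeping the exact boundary of each block — in particular verifying that $w_2$ and $w_3$ are self-delimiting purely through their fixed count of $n-1$ ones, and that $|w_4|$ is pinned down before any bit of $w_4$ is read — but these are precisely the places where the chosen unary-plus-enumerative format was designed to make decoding unambiguous, so each is a short verification rather than a real obstacle.
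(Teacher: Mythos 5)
Your proposal is correct and follows essentially the same route as the paper's proof: decode $w_0,w_1,w_2,w_3,w_4$ sequentially, observing that each block's length (and content) is determined by what has already been read — $n$ from $w_0$, the types and hence $|\G|$ from $w_1$, the $|u_i|$ from $w_2$, the frequencies $\eta(\cdot)$ from $w_3$, and finally $|S|$, $|w_4|$ and $\omega_\G$ from the enumerative code — so that $\G$ and the codeword boundary are uniquely recovered. Your version merely adds detail the paper leaves implicit (the explicit self-delimiting of $w_2,w_3$ via their $n-1$ ones and the separate treatment of $\G_a$), which is fine.
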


\begin{proof}
Let $B(\G) = w_0 w_1 w_2 w_3 w_4$ with $w_i$ defined as above. We show how to recover the TSLP $\G$.
From $w_0$ we can determine $n = |V|$ and the factor $w_1$ of 
$B(\G)$. Hence, we can determine the type of every nonterminal. Using the types, we can compute $|\G| = |\rho_{\G}|$. 
Moreover, the types allow to compute $\G$ from the word $\rho_G$. Hence, it remains to determine $\rho_{\G}$.
From $|\G|$ one can compute $w_2$. To compute $\rho_\G$ from $w_2$, one only needs $\omega_{\G}$. 
For this, one determines the frequencies $\eta(a), \eta(A_1), \ldots, \eta(A_{n-1})$ 
of the symbols in $\omega_{\G}$ from $w_3$. Using these frequencies one computes the size $|S|$ from \eqref{size-S}
and the length $\lceil \log_2 |S| \rceil$ of $w_4$. From $w_4$, one can finally compute $\omega_G$.
\end{proof}

Note that $|B(\G)| \leq O(|\G|) + |w_4|$. By using the well-known bound on the code length 
of enumerative encoding \cite[Theorem~11.1.3]{CoTh06}, we get:
  
\begin{lemma} \label{lemma-binary-coding}
For the length of the binary coding $B(\G)$ we have: 
$|B(\G)| \leq O(|\G|) + H(\G)$.  
\end{lemma}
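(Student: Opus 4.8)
\emph{Proof sketch (plan).} The plan is to bound the lengths of the five blocks $w_0,w_1,w_2,w_3,w_4$ of $B(\G)=w_0w_1w_2w_3w_4$ individually, showing that the four ``structural'' blocks together have length $O(|\G|)$ while the enumerative block $w_4$ has length at most $H(\G)+O(1)$.

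First I would dispose of $w_0,w_1,w_2,w_3$. Straight from the definitions, $|w_0|=n$, $|w_1|=2n$, $|w_2|=(n-1)+\sum_{i=1}^{n-1}|u_i|=(n-1)+|\omega_{\G}|$, and $|w_3|=\sum_{i=1}^{n-1}k_i$. Since each of $A_1,\dots,A_{n-1}$ occurs at least once in $\rho_{\G}$ we have $n-1\le|\rho_{\G}|=|\G|$; moreover $|\omega_{\G}|=|\rho_{\G}|-(n-1)\le|\G|$, and $\sum_{i=1}^{n-1}k_i=|\rho_{\G}|-|\rho_{\G}|_a\le|\G|$. Adding these up gives $|w_0|+|w_1|+|w_2|+|w_3|\le 4n-1+2|\G|=O(|\G|)$, which is exactly the estimate $|B(\G)|\le O(|\G|)+|w_4|$ already noted before the lemma. (The degenerate case $\G=\G_a$ is trivial, $|B(\G_a)|=O(1)$.)

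It then remains to bound $|w_4|$. Since $w_4$ is a binary index into the set $S$, we have $|w_4|=\lceil\log_2|S|\rceil\le\log_2|S|+1$, so it suffices to prove $\log_2|S|\le H(\G)$. The key observation is that $S$ is precisely the set of all words over $\{a,A_1,\dots,A_{n-1}\}$ having the same symbol frequencies as $\omega_{\G}$ --- in information-theoretic terms, $S$ is the type class of $\omega_{\G}$, whose empirical distribution is $p_{\omega_{\G}}$ and whose length is $N:=|\omega_{\G}|=m-n+1$ (one checks that $\eta(a)=m-\sum_i k_i$ and $\eta(A_i)=k_i-1$ sum to $N$, so the frequencies are consistent). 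The standard bound on the size of a type class, \cite[Theorem~11.1.3]{CoTh06} --- equivalently the elementary multinomial estimate $\binom{N}{\eta(a),\eta(A_1),\dots,\eta(A_{n-1})}\le\prod_{s}(N/\eta(s))^{\eta(s)}$ --- gives
$$
\log_2|S|\;\le\;\sum_{s}\eta(s)\cdot\log_2\frac{N}{\eta(s)}\;=\;H(\omega_{\G})\;=\;H(\G),
$$
where $s$ ranges over $\{a,A_1,\dots,A_{n-1}\}$ (terms with $\eta(s)=0$ being omitted), the first equality being the definition of the unnormalized empirical entropy of $\omega_{\G}$ and the second the definition of $H(\G)$. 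Hence $|w_4|\le H(\G)+1$, and combining with the previous paragraph, $|B(\G)|\le O(|\G|)+H(\G)$.

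The only genuinely non-routine point is this identification of $S$ with the type class of $\omega_{\G}$ together with the attendant bookkeeping of the symbol frequencies; everything else is the purely arithmetic multinomial inequality (packaged in \cite{CoTh06}), the trivial $\lceil x\rceil\le x+1$, and the direct length computations for $w_0,\dots,w_3$.
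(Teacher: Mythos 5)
Your proposal is correct and follows exactly the route the paper takes: the paper disposes of $w_0,\dots,w_3$ with the remark $|B(\G)|\le O(|\G|)+|w_4|$ and then invokes the enumerative-coding (type-class) bound of \cite[Theorem~11.1.3]{CoTh06} to get $|w_4|\le H(\omega_{\G})+1=H(\G)+1$. You merely spell out the block-length bookkeeping and the identification of $S$ as the type class of $\omega_{\G}$, which the paper leaves implicit, and both steps are carried out correctly.
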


\subsection{Universal source coding based on TSLPs in normal form} \label{sec-univesal-tslp}

Let $((\mathcal{F}_i)_{i \in \mathbb{N}}, P)$ be a tree source as defined in Section~\ref{sec-trees}.
We say that $((\mathcal{F}_i)_{i \in \mathbb{N}}, P)$ has 
the {\em strong domination property} if the there exists a mapping $\lambda : \T \cup \cT \to \mathbb{R}_{>0}$
with the following properties:
\begin{enumerate}[(i)]
\item \label{dom1} $\lambda(t) \geq P(t)$ for every $t \in \T$.
\item \label{dom2} $\lambda(f(s,t)) \leq \lambda(s) \cdot \lambda(t)$ for all $s,t \in \T$ 
\item \label{dom3} $\lambda(s(t)) \leq \lambda(s) \cdot \lambda(t)$ for all $s \in \cT$ and $t \in \T$ 
\item \label{dom4} There are constants $c_1, c_2$ such that $\sum_{t \in \T_n \cup \cT_n} \lambda(t) \leq c_1 \cdot n^{c_2}$ for all $n \geq 1$.
\end{enumerate}
The proof of the following lemma combines ideas from \cite{KiYa00} and \cite{ZhangYK14}.

\begin{lemma} \label{lemma-entropy-bound}
Assume that $((\mathcal{F}_i)_{i \in \mathbb{N}}, P)$ has 
the strong domination property. 
Let $t \in \T_n$ with $n \geq 2$ and $P(t)>0$, and let $\G = (V,A_0,r)$ be a TSLP in normal form with $\val(\G)=t$.
We have
$$
H(G) \leq  - \log_2 P(t) + O(|\G|) + O( |\G| \cdot \log_2 (n/|\G|)).
$$
\end{lemma}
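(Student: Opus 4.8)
The plan is to bound the empirical entropy $H(\G) = H(\omega_{\G})$ by exhibiting a probability-like weight function on the symbols appearing in $\omega_{\G}$ and then invoking the generalized Shannon inequality \eqref{shannon}. Recall that $\omega_{\G}$ is obtained from $\rho_{\G}$ by deleting, for each nonterminal $A_i$ with $1 \le i \le n-1$, its first occurrence; thus every symbol occurring in $\omega_{\G}$ is either $a$ or some nonterminal $A_i$, and each such $A_i$ stands for the tree or context $\val_{\G}(A_i)$. The key idea, following \cite{KiYa00,ZhangYK14}, is to use the weight $q(A_i) = \lambda(\val_{\G}(A_i))$ for nonterminals (where $\lambda$ is the map witnessing the strong domination property) and a suitable weight $q(a)$ for the leaf symbol. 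First I would verify that $\sum_{\text{symbols } s \text{ of } \omega_{\G}} q(s)^{?}$ can be controlled; more precisely, since the sizes $|\val_{\G}(A_i)|$ need not be distinct and several nonterminals may have small size, one normalizes by summing $\lambda$ over all trees and contexts of each size $k$, which by property~\ref{dom4} is at most $c_1 k^{c_2}$. Concretely I would set $q(s) = \lambda(\val_{\G}(s)) \big/ \big(\sum_{k \ge 1} c_1 k^{c_2} \cdot k^{-(c_2+2)}\big)$ or a similarly rescaled quantity so that $\sum_s q(s) \le 1$ when summed with multiplicity over the symbol occurrences of $\omega_{\G}$, using that distinct nonterminals derive distinct trees/contexts (the last clause of the normal-form definition) so no tree or context is counted twice.

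Next I would apply \eqref{shannon} to the tuple of symbol occurrences of $\omega_{\G}$, obtaining
$$
H(\omega_{\G}) = \sum_{\text{occurrences } s} -\log_2 p_{\omega_{\G}}(s) \le \sum_{\text{occurrences } s} -\log_2 q(s).
$$
The right-hand side splits into a contribution from the $a$-occurrences, a contribution of the form $\sum_i \eta(A_i) \cdot \big(-\log_2 \lambda(\val_{\G}(A_i))\big)$, and a contribution from the normalization constant, which is $O(|\G|)$ since $\omega_{\G}$ has at most $|\G|$ symbols and each contributes an $O(\log k)$-type term that must be summed more carefully — this is where the $O(|\G| \cdot \log_2(n/|\G|))$ term should appear, by convexity of $\log$: the $|\G|$ values $|\val_{\G}(A_i)|$ are bounded in terms of $n$ and there are $|\G|$ of them, so $\sum_i \log_2 |\val_{\G}(A_i)|$ is maximized (by Jensen) when all sizes are comparable to $n/|\G|$, giving the stated bound.

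The heart of the argument is to show that $\sum_i \eta(A_i) \cdot \big(-\log_2 \lambda(\val_{\G}(A_i))\big)$, together with the $a$-contribution, telescopes to at most $-\log_2 P(t) = -\log_2 \lambda$-free$\ldots$ — precisely, to $-\log_2 \lambda(\val(\G))$ plus lower-order terms, which is at most $-\log_2 P(t)$ by property~\ref{dom1}. For this I would unfold the normal-form right-hand sides: each $A_i$ with $\type(A_i) \in \{2,3\}$ has $r(A_i) = f(\alpha, x)$ or $f(x,\alpha)$, and each $A_i$ of type $0$ or $1$ composes two nonterminals; applying property~\ref{dom2} (for the $f$-rules, viewing $\val_{\G}(A_i)$ as built from $\val_{\G}(\alpha)$ and a hole) and property~\ref{dom3} (for the composition rules $A_j(A_k(x))$ and $A_j(\alpha)$), one gets $\lambda(\val_{\G}(A_i)) \le \lambda(\val_{\G}(A_j)) \cdot \lambda(\val_{\G}(A_k))$ or the analogous inequality with $\lambda(\val_{\G}(\alpha))$. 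Taking $-\log_2$ and summing the resulting inequalities with the multiplicities $\eta$ should collapse: the $\rho_{\G}$-structure guarantees that each nonterminal other than $A_0$ occurs exactly once "as a left-hand side head" and its occurrences in other right-hand sides are exactly counted by $\eta$, so the telescoping leaves only $-\log_2 \lambda(\val_{\G}(A_0)) = -\log_2 \lambda(t) \le -\log_2 P(t)$.

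The main obstacle I anticipate is getting the bookkeeping of this telescoping exactly right — matching the occurrence counts $\eta(A_i)$, $\eta(a)$ to the combinatorics of $\rho_{\G} = A_1 u_1 \cdots A_{n-1} u_{n-1}$, and making sure the normalization constant folded into $q$ does not contribute more than $O(|\G|) + O(|\G|\log_2(n/|\G|))$. The $f$-rules need slightly careful handling since $\lambda$ is only assumed subadditive-in-logarithm on $\T$ and on $\cT \times \T$, not directly on contexts of the form $f(\alpha,x)$; but $\lambda(f(\alpha,x))$ as a context equals, by definition of the depth/size conventions, something we can bound via $\lambda$ on the associated tree $f(\alpha, a)$ using property~\ref{dom2}, so I would insert that translation step explicitly. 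Everything else is a routine application of \eqref{shannon} and Jensen's inequality.
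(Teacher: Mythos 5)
Your overall skeleton (Shannon's inequality \eqref{shannon} with weights built from $\lambda$, a per-size normalization coming from property \eqref{dom4}, Jensen's inequality for the size-logarithms, and property \eqref{dom1} at the very end) is the same as the paper's, but the step you yourself call the heart of the argument fails as designed. You telescope rule by rule, which requires, for a rule $r(A_i)=f(\alpha,x)$ or $f(x,\alpha)$, the inequality $\lambda\bigl(f(\val_{\G}(\alpha),x)\bigr)\le\lambda(\val_{\G}(\alpha))$, and, for a rule $r(A_i)=A_j(A_k(x))$, the inequality $\lambda(s(s'))\le\lambda(s)\cdot\lambda(s')$ for two \emph{contexts} $s,s'$. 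Neither follows from the strong domination property: \eqref{dom2} concerns only trees, \eqref{dom3} only a context composed with a \emph{tree}, and none of \eqref{dom1}--\eqref{dom4} bounds $\lambda$ of a context from above in terms of $\lambda$ of related trees or smaller contexts (one can inflate $\lambda$ on a single context of each size without disturbing \eqref{dom1}--\eqref{dom4}, making both per-rule inequalities false). Your proposed repair for the $f$-rules --- comparing $\lambda(f(\alpha,x))$ with $\lambda$ of the tree $f(\alpha,a)$ via \eqref{dom2} --- uses a relation between $\lambda$ on contexts and $\lambda$ on trees that is simply not assumed. The paper avoids all of this by not arguing rule by rule: it cuts the derivation tree to an initial subtree $T'$ in which every rule is used exactly once, whose leaf labels form exactly the multiset $\omega_{\G}$, and proves $\lambda(t)\le\prod_i\lambda(t_i)$ over the leaf values $t_i=\val_{\G}(\alpha_i)$ by an induction in which \eqref{dom2} and \eqref{dom3} are applied only at the moment a parameter is filled by a fully evaluated tree; $\lambda$ is never evaluated on the intermediate contexts created by $f$-rules, nor on a context-context composition, so only the assumed properties are used.

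Two further gaps. First, your Jensen step needs the linear constraint $\sum_i |t_i|\le n$, summed over the occurrences in $\omega_{\G}$ with multiplicities; the fact you cite (each size is at most $n$) only yields $O(|\G|\log_2 n)$, not $O(|\G|\log_2(n/|\G|))$. That constraint is exactly what the initial subtree provides (the leaf values compose to $t$ and sizes add under substitution, so $\sum_i|t_i|=n$), and it is neither stated nor justified in your write-up. Second, your concrete normalization divides by the global constant $\sum_{k\ge1}c_1k^{c_2}\cdot k^{-(c_2+2)}$, which does not make the weights sum to at most $1$; the normalization must depend on the size of the value, e.g. $q(u)=D\cdot\lambda(u)/(M_{|u|}\cdot|u|^2)$ with $M_j=\sum_{u\in\T_j\cup\cT_j}\lambda(u)\le c_1 j^{c_2}$, and it is this size-dependent factor that, after Jensen, produces the $O(|\G|)+O(|\G|\cdot\log_2(n/|\G|))$ terms. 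These two points are fixable with the derivation-tree construction, but the rule-by-rule telescoping is a step that genuinely does not go through under the stated hypotheses.
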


\begin{proof}
Let $m = |\G| = |\rho_{\G}|$ be the size of $\G$, $k = |V|$, and $\ell := m+1-k \leq m$. 
Let $T = T_{\G}$ be the derivation tree of $\G$. We define an initial subtree $T'$ as follows: If $v_1$ and $v_2$
are non-leaf nodes of $T$ that are labelled with the same nonterminal and $v_1$ comes before $v_2$
in preorder, then we remove from $T$ all proper descendants of $v_2$. Thus, for every $A_i \in V$ there 
is exactly one non-leaf node in $T'$ that is labelled with $A_i$. Let $t' = \val_{\G}(T')$. In the derivation of $t'$ from $A_0$, every rule of $\G$ is used exactly
once. For the TSLP from Example~\ref{example:TSLPnormalform}, the tree $T'$ is shown in Figure~\ref{fig-derivation-tree} on the right,
and the tree $t'$ is computed in Example~\ref{example-derivation-tree}.

Note that $T'$ has exactly $m+1$ many nodes and $k$ non-leaf nodes. 
Thus, $T'$ has $\ell$ leaves.
Let $v_1, v_2, \ldots, v_{\ell}$ be the sequence of all leaves of $T'$ (w.l.o.g. in preorder)
and let $\alpha_i \in  \{a,A_1,\ldots,A_{k-1}\}$ be the label of $v_i$.
Let $\overline{\alpha} = (\alpha_1, \alpha_2, \ldots, \alpha_{\ell})$.
Then $|\omega_{\G}|_\alpha = |\overline{\alpha}|_\alpha$ for every $\alpha \in \{a,A_1,\ldots,A_{k-1}\}$,
and this is also the number of occurrences of $\alpha$ in the tree $t'$.
Hence, $p_{\overline{\alpha}}$ and $p_{\omega_{\G}}$ have the same empirical distribution. 
Let $t_i = \val_{\G}(\alpha_i) \in \T \cup \cT$.
For the TSLP from Example~\ref{example:TSLPnormalform} we get 
$\overline{\alpha} = (a,a,A_4,A_3)$.
Since $\val_{\G}(A_i) \neq \val_{\G}(A_j)$ for all $i \neq j$ and $\val_{\G}(A_i) \neq a$ for all $i$ (this holds for every
normal form TSLP that produces a tree of size at least two), 
the tuple $\overline{t} = (t_1, t_2, \ldots, t_{\ell})$ satisfies
$p_{\omega_{\G}}(\alpha_i) = p_{\overline{t}}(t_i)$ for all $1\le i\le \ell$.

Using conditions \eqref{dom2} and \eqref{dom3} of the strong domination property, we get
\begin{equation} \label{ineq-lambda}
\lambda(t) \leq \prod_{i=1}^{\ell} \lambda(t_i) .
\end{equation}
For $j \in \N$, $j \geq 1$, let 
$$
M_j = \sum_{u \in \T_j \cup \cT_j} \lambda(u) \leq c_1 \cdot j^{c_2},
$$
where $c_1$ and $c_2$ are the constants from condition \eqref{dom4} of the strong domination property.
Let $D := 6/\pi^2 \geq 1/2$ and define for every $u \in \T_j \cup \cT_j$:
\begin{equation} \label{def-q(u)}
q(u) := \frac{D \cdot \lambda(u)}{M_j \cdot j^2} > 0.
\end{equation}
We get 
$$
\sum_{j \geq 1} \sum_{u \in \T_j \cup \cT_j}  q(u) = D \cdot \sum_{j \geq 1} \frac{1}{j^2} = 1 .
$$
Hence, we have $q(t_1) + q(t_2) + \cdots + q(t_{\ell}) \leq 1$. 
Using Shannon's inequality \eqref{shannon} we get
\begin{equation*}
H(G) = H(\omega_{\G}) =  \sum_{i=1}^{\ell} -\log_2 p_{\omega_{\G}}(\alpha_i) =   \sum_{i=1}^{\ell}  -\log_2 p_{\overline{t}}(t_i) \leq  \sum_{i=1}^{\ell}  -\log_2 q(t_i) .
\end{equation*}
Using \eqref{def-q(u)} and $D \geq 1/2$ we obtain
\begin{eqnarray*}
H(G) & \leq &  \sum_{i=1}^{\ell}  -\log_2 \bigg( \frac{D \cdot \lambda(t_i)}{M_{|t_i|} \cdot |t_i|^2} \bigg) \\
& = & -\ell \cdot \log_2 D  - \sum_{i=1}^{\ell} \log_2 \lambda(t_i) + \sum_{i=1}^{\ell} \log_2 M_{|t_i|} + 2 \sum_{i=1}^{\ell}  \log_2 |t_i| \\
& \leq & \ell  - \sum_{i=1}^{\ell} \log_2 \lambda(t_i) + \sum_{i=1}^{\ell} (\log_2 c_1 + c_2 \log_2 |t_i|) + 2 \sum_{i=1}^{\ell}  \log_2 |t_i|  \\
& = & (1+\log_2 c_1) \cdot \ell - \sum_{i=1}^{\ell} \log_2 \lambda(t_i) + (2+c_2) \cdot \sum_{i=1}^{\ell}  \log_2 |t_i| .
\end{eqnarray*}
From \eqref{ineq-lambda} and condition \eqref{dom1} of the strong domination property we get
$$
 \sum_{i=1}^{\ell} \log_2 \lambda(t_i)  \geq \log_2 \lambda(t) \geq \log_2 P(t) .
$$
Moreover, Jensen's inequality gives
$$
\sum_{i=1}^{\ell}  \log_2 |t_i|  \leq \ell \cdot \log_2 \bigg( \frac{1}{\ell} \cdot \sum_{i=1}^{\ell} |t_i| \bigg) =  \ell \cdot \log_2 (n/\ell).
$$
With $\ell \leq m$ we obtain 
\begin{eqnarray*}
H(G) & \leq & (1+\log_2 c_1) \cdot \ell - \log_2 P(t) + (2+c_2) \cdot  \ell \cdot \log_2 (n/\ell) \\
& \leq &   - \log_2 P(t) + (1+\log_2 c_1) \cdot m +  (2+c_2) \cdot  m \cdot \log_2 (n/m).
\end{eqnarray*}
This shows the lemma.
\end{proof}

Let $\psi: t \mapsto \G_t$ be a grammar-based tree compressor. 
We then consider the tree encoder $E_\psi : \T \to \{0,1\}^*$ defined
by $E_\psi(t) = B(\G_t)$. 
Recall the definition of the  worst-case redundancy $R(E_\psi, \mathcal{S},i)$
from Section~\ref{sec-trees}.

\begin{theorem} \label{thm-red-bound}
Assume that $\mathcal{S} = ((\mathcal{F}_i)_{i \in \mathbb{N}}, P)$ has 
the strong domination property. Let $\psi$ be a grammar-based compressor such that
$\gamma_\psi(n) \leq \gamma(n)$ for  a monotonically decreasing function $\gamma(n)$ with 
$\lim_{n \to \infty} \gamma(n) = 0$. Let $n_i = \min \{ |t| \mid t \in \mathcal{F}_i \}$ 
and assume that $n_i < n_{i+1}$ for all $i \in \mathbb{N}$.\footnote{This is the case
for leaf-centric and depth-centric tree sources}
Then, we have
$$
R(E_\psi, \mathcal{S},i) \leq O(\gamma(n_i) \cdot \log_2 (1/\gamma(n_i))).
$$
\end{theorem}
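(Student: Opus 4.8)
The plan is to combine the two main tools we have built: the binary-coding bound of Lemma~\ref{lemma-binary-coding} and the entropy bound of Lemma~\ref{lemma-entropy-bound}. Fix $i \in \mathbb{N}$ and a tree $t \in \mathcal{F}_i$ with $P(t) > 0$, and set $n = |t|$. Let $\G_t = \psi(t)$ be the normal-form TSLP produced by the compressor and put $m = |\G_t|$. By definition of the compression ratio and the hypothesis $\gamma_\psi(n) \le \gamma(n)$ we have $m \le \gamma(n) \cdot n$. First I would chain the two lemmas: Lemma~\ref{lemma-binary-coding} gives $|E_\psi(t)| = |B(\G_t)| \le O(m) + H(\G_t)$, and Lemma~\ref{lemma-entropy-bound} gives $H(\G_t) \le -\log_2 P(t) + O(m) + O(m \cdot \log_2(n/m))$ (this requires $n \ge 2$; the finitely many small trees, in particular $t = a$, can be absorbed into the $O$-constant, or handled by noting $R$ is only asserted asymptotically). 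Adding these and rearranging,
\begin{equation*}
|E_\psi(t)| + \log_2 P(t) \le O(m) + O\!\left( m \cdot \log_2 (n/m) \right).
\end{equation*}

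Next I would divide by $n = |t|$ and express everything through the ratio $r := m/n \le \gamma(n)$. This yields
\begin{equation*}
\frac{1}{|t|} \left( |E_\psi(t)| + \log_2 P(t) \right) \le O(r) + O\!\left( r \cdot \log_2(1/r) \right) = O\!\left( r \cdot \log_2(1/r) \right),
\end{equation*}
using that $r \le \gamma(n) \to 0$, so for $n$ large enough $r$ is bounded away from $1$ and the term $r$ is dominated by $r \log_2(1/r)$. Now the function $x \mapsto x \log_2(1/x)$ is monotonically increasing on $(0, 1/e]$, so once $\gamma(n) \le 1/e$ — which holds for all sufficiently large $n$ since $\gamma \to 0$ — we may replace $r$ by the larger quantity $\gamma(n)$ to obtain the bound $O(\gamma(n) \cdot \log_2(1/\gamma(n)))$. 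Finally, since $\gamma$ is monotonically decreasing and $n \ge n_i$ for every $t \in \mathcal{F}_i$, we have $\gamma(n) \le \gamma(n_i)$, and by the same monotonicity of $x \mapsto x \log_2(1/x)$ near $0$ we get $\gamma(n)\log_2(1/\gamma(n)) \le \gamma(n_i)\log_2(1/\gamma(n_i))$. Taking the maximum over all $t \in \mathcal{F}_i$ with $P(t) > 0$ gives $R(E_\psi, \mathcal{S}, i) \le O(\gamma(n_i) \log_2(1/\gamma(n_i)))$ for all large $i$; the hypothesis $n_i < n_{i+1}$ together with finiteness of each $\mathcal{F}_i$ ensures this is a statement about all but finitely many $i$, and the remaining ones are absorbed into the $O$.

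The only genuinely delicate point is the monotonicity manoeuvre: one must make sure the argument $m/n$ (and then $\gamma(n)$, and then $\gamma(n_i)$) has actually entered the region $(0, 1/e]$ where $x \log_2(1/x)$ is increasing before each replacement, and that the error term $O(m)$ is legitimately subsumed — both are guaranteed precisely by $\lim_{n\to\infty}\gamma(n) = 0$, which is why that hypothesis appears in the statement. Everything else is bookkeeping with $O$-notation. I do not anticipate needing the strong domination property beyond its single use inside Lemma~\ref{lemma-entropy-bound}, nor the specific structure of leaf- or depth-centric sources beyond the footnoted fact that $n_i < n_{i+1}$.
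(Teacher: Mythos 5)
Your proposal is correct and follows essentially the same route as the paper's proof: chain Lemma~\ref{lemma-binary-coding} with Lemma~\ref{lemma-entropy-bound}, normalize by $|t|$, and use the monotonicity of $x \mapsto x\log_2(1/x)$ on $(0,1/e]$ together with $|\G_t|/|t| \le \gamma(|t|) \le \gamma(n_i)$ for large $i$ to reach the stated bound. The only cosmetic difference is that the paper keeps the $O(\gamma(n_i))$ term separate and absorbs it at the end rather than absorbing $O(m)$ into $O(m\log_2(n/m))$ early, which changes nothing.
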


\begin{proof}
Let $t \in \T$ and $\psi(t) = \G_t$.  
With Lemma~\ref{lemma-binary-coding} and \ref{lemma-entropy-bound} we get
\begin{eqnarray*}
 \frac{1}{|t|} \cdot  (|B(\G_t)| + \log_2 P(t))  & \leq & \frac{1}{|t|} \cdot  (H(\G_t) + O(|\G_t|) + \log_2 P(t)) \\
 & \leq &  \frac{1}{|t|} \cdot  ( O(|\G_t|) + O( |\G_t| \cdot \log_2 (|t|/|\G_t|)) \\
 & = & O(|\G_t|/|t|) + O( |\G_t|/|t| \cdot \log_2 (|t|/|\G_t|)) .
\end{eqnarray*}
Consider the mapping $g$ with $g(x) = x \cdot \log_2(1/x)$. It is monotonically increasing for  $0 \leq x \leq 1/e$. 
If $i$ is large enough, we have for all $t \in \mathcal{F}_i$ that
$$|\G_t|/|t| \leq \gamma_\psi(|t|) \leq \gamma(|t|) \leq  \gamma(n_i) \leq 1/e . 
$$
Hence, we get
$$
|\G_t|/|t| \cdot \log_2 (|t|/|\G_t|) = g(|\G_t|/|t|) \leq g( \gamma(n_i)) = \gamma(n_i) \cdot \log_2 (1/\gamma(n_i)) .
$$
This implies 
\begin{eqnarray*}
R(E_\psi, \mathcal{S},i) & = & \max_{t \in  \mathcal{F}_i, P(t) > 0} \frac{1}{|t|} \cdot  (|B(\G_t)| + \log_2 P(t)) \\
& \leq  & O(\gamma(n_i)) + O( \gamma(n_i) \cdot \log (1/\gamma(n_i))) \\
& = & O( \gamma(n_i) \cdot \log (1/\gamma(n_i))) ,
\end{eqnarray*}
which proves the theorem.
\end{proof}

Note that the minimal size of a tree in $\T_{i+1}$ (resp. $\T^i$) is $i+1$.
Hence, Theorem~\ref{thm-compression-ratio} and \ref{thm-red-bound} yield:

\begin{corollary} \label{coro1}
There exists a grammar-based tree compressor $\psi$ (working in linear time) such that 
$R(E_\psi, \mathcal{S},i) \leq O( \log\log i / \log i)$ for every leaf-centric or depth-centric tree source 
$\mathcal{S}$ having the strong domination property.
\end{corollary}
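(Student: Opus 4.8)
The plan is to combine the two named results directly. First I would invoke Theorem~\ref{thm-compression-ratio} to obtain a concrete grammar-based compressor $\psi$ that runs in linear time and satisfies $\gamma_\psi(n) \in O(1/\log n)$; fix a monotonically decreasing function $\gamma(n) = c/\log_2 n$ (for a suitable constant $c$ and all $n$ large enough, extended arbitrarily on small $n$ to stay monotone) with $\gamma_\psi(n) \leq \gamma(n)$ and $\lim_{n\to\infty}\gamma(n)=0$. This is exactly the hypothesis shape required by Theorem~\ref{thm-red-bound}.

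Next I would check the remaining hypothesis of Theorem~\ref{thm-red-bound}, namely that $n_i = \min\{|t| : t \in \mathcal{F}_i\}$ is strictly increasing. For a leaf-centric source $\mathcal{F}_i = \T_{i+1}$ every tree has exactly $i+1$ leaves, so $n_i = i+1$; for a depth-centric source $\mathcal{F}_i = \T^i$ the minimal number of leaves of a tree of depth $i$ is $i+1$ (the caterpillar of depth $i$), so again $n_i = i+1$. In both cases $n_i = i+1 < i+2 = n_{i+1}$, as noted just before the corollary statement, so the hypothesis holds.

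Finally I would apply Theorem~\ref{thm-red-bound} to the source $\mathcal{S}$, which by assumption has the strong domination property, together with $\psi$ and $\gamma$ as above. The theorem gives
$$
R(E_\psi, \mathcal{S}, i) \leq O\bigl(\gamma(n_i) \cdot \log_2(1/\gamma(n_i))\bigr).
$$
Substituting $n_i = i+1$ and $\gamma(n_i) = \Theta(1/\log i)$, we have $\log_2(1/\gamma(n_i)) = \Theta(\log\log i)$, hence the right-hand side is $O(\log\log i / \log i)$, which is the claimed bound.

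There is essentially no obstacle here: the corollary is a bookkeeping combination of Theorem~\ref{thm-compression-ratio} (existence of an $O(1/\log n)$-size TSLP compressor) and Theorem~\ref{thm-red-bound} (redundancy bound from the strong domination property). The only mild point of care is arranging the bound $\gamma_\psi(n)\le\gamma(n)$ with $\gamma$ genuinely monotonically decreasing to $0$ — since $c/\log_2 n$ is only defined and monotone for $n\ge 2$, one should just note that the asymptotic statement is unaffected by redefining $\gamma$ on finitely many small arguments, and that $R(E_\psi,\mathcal{S},i)$ is an asymptotic-in-$i$ quantity anyway.
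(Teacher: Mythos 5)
Your proposal is correct and is essentially the paper's own argument: the paper proves the corollary exactly by noting that the minimal size of a tree in $\T_{i+1}$ (resp.\ $\T^i$) is $i+1$ and then combining Theorem~\ref{thm-compression-ratio} with Theorem~\ref{thm-red-bound}, just as you do. Your extra remarks on making $\gamma(n)=c/\log_2 n$ monotone on small arguments are harmless bookkeeping and do not change the route.
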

In the rest of the paper, we will present classes of leaf-centric and depth-centric tree sources
that have the strong domination property.

\subsection{Leaf-centric binary tree sources} \label{sec-leaf-centric2}

Recall the definition of the class of mappings $\Sigma_{\text{leaf}}$ by equations
\eqref{eq-sigma-leaf}, \eqref{lcsource1}, and \eqref{lcsource2} in Section~\ref{sec-leaf-centric} and
the corresponding class of leaf-centric tree sources.
In this section, we state a condition on the mapping $\sigma \in \Sigma_{\text{leaf}}$ that enforces the strong domination
property for the leaf-centric tree source $( (\T_{i})_{i \geq 1}, P_\sigma)$. This allows to apply Corollary~\ref{coro1}.

\begin{theorem}\label{theorem:lcts}
If $\sigma \in \Sigma_{\text{leaf}}$ satisfies 
$\sigma(i,j)\ge\sigma(i,j+1)$ and $\sigma(i,j)\ge\sigma(i+1,j)$
for all $i,j\ge 1$, then $((\T_i)_{i\ge 1},P_\sigma)$ has the strong domination property.
\end{theorem}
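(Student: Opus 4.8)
The plan is to exhibit an explicit witness mapping $\lambda : \T \cup \cT \to \mathbb{R}_{>0}$ for the strong domination property. On trees the natural candidate is $P_\sigma$ itself, but $P_\sigma$ may take the value $0$ (individual values $\sigma(i,j)$ may vanish), whereas $\lambda$ must be strictly positive; I therefore pad $P_\sigma$ from below by an exponentially small term. Concretely, I first extend $P_\sigma$ to contexts by setting $P_\sigma(s) := P_\sigma(s(a))$ for $s \in \cT$, and then define
$$
\lambda(t) \;=\; \max\bigl\{\, P_\sigma(t),\ 4^{-|t|}\,\bigr\} \qquad (t \in \T \cup \cT).
$$
The verification of \eqref{dom1}--\eqref{dom4} will repeatedly use the elementary inequality $\max\{ab,a'b'\} \le \max\{a,a'\}\cdot\max\{b,b'\}$, valid for all nonnegative reals.

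Conditions \eqref{dom1} and \eqref{dom2} need nothing beyond $\sigma \le 1$: \eqref{dom1} holds since $\lambda(t) \ge P_\sigma(t) = P(t)$ for $t \in \T$, and for \eqref{dom2} one observes that $P_\sigma(f(s,t)) = \sigma(|s|,|t|)\,P_\sigma(s)\,P_\sigma(t) \le P_\sigma(s)P_\sigma(t)$ and $4^{-|f(s,t)|} = 4^{-|s|}4^{-|t|}$, so $\lambda(f(s,t)) \le \max\{P_\sigma(s)P_\sigma(t),\,4^{-|s|}4^{-|t|}\} \le \lambda(s)\lambda(t)$. Since the monotonicity hypothesis is not used here and is not needed for \eqref{dom4} restricted to $\T$ either, this part of the construction already supplies the witness required for Lemma~\ref{lemma-leaf-centric-weak-dom}, for an arbitrary $\sigma \in \Sigma_{\text{leaf}}$.

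The heart of the argument is \eqref{dom3}, and this is the only step using the hypotheses on $\sigma$. I will prove, by structural induction on $s$, that
$$
P_\sigma(s(t)) \;\le\; P_\sigma(s(a)) \cdot P_\sigma(t) \qquad \text{for all } s \in \cT,\ t \in \T.
$$
The base case $s = x$ is immediate. If $s = f(u,s')$ with $u \in \T$ and $s' \in \cT$, then $P_\sigma(s(t)) = \sigma(|u|,|s'(t)|)\,P_\sigma(u)\,P_\sigma(s'(t))$; since $|s'(t)| = |s'| + |t| \ge |s'| + 1 = |s'(a)|$ and $\sigma$ is non-increasing in its second argument, we get $\sigma(|u|,|s'(t)|) \le \sigma(|u|,|s'(a)|)$, and the induction hypothesis gives $P_\sigma(s'(t)) \le P_\sigma(s'(a))\,P_\sigma(t)$; multiplying yields $P_\sigma(s(t)) \le P_\sigma(f(u,s'(a)))\,P_\sigma(t) = P_\sigma(s(a))\,P_\sigma(t)$. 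The case $s = f(s',u)$ is symmetric, now using that $\sigma$ is non-increasing in its first argument. Given this inequality, \eqref{dom3} follows from $|s(t)| = |s| + |t|$ and the elementary inequality above, exactly as for \eqref{dom2}.

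For \eqref{dom4} I bound $\sum_{t \in \T_n \cup \cT_n}\lambda(t) \le \bigl(\sum_{t\in\T_n}P_\sigma(t) + |\T_n|\,4^{-n}\bigr) + \bigl(\sum_{s\in\cT_n}P_\sigma(s(a)) + |\cT_n|\,4^{-n}\bigr)$. Using $\sum_{t\in\T_n}P_\sigma(t) = 1$ and $|\T_n| = C_{n-1} \le 4^{n-1}$, the first bracket is at most $1 + 1/4$. For the second, each context in $\cT_n$ arises from some $t' \in \T_{n+1}$ by designating one of its $n+1$ leaves as the parameter, whence $|\cT_n| \le (n+1)\,|\T_{n+1}| \le (n+1)4^n$ and, by the same surjection, $\sum_{s\in\cT_n}P_\sigma(s(a)) \le \sum_{t'\in\T_{n+1}}(n+1)\,P_\sigma(t') = n+1$; so the second bracket is at most $2(n+1)$. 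Hence $\sum_{t\in\T_n\cup\cT_n}\lambda(t) = O(n)$ and \eqref{dom4} holds with $c_2 = 1$. The single genuine obstacle is the inductive inequality $P_\sigma(s(t)) \le P_\sigma(s(a))\,P_\sigma(t)$; everything else is routine.
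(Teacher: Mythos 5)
Your proof is correct and takes essentially the same route as the paper: the paper's witness is $\lambda(t)=\max\{1/C_{|t|},\,P_\sigma(t)\}$ with $P_\sigma$ extended to contexts by treating the parameter as having size $0$ (via $\sigma(0,k)=\sigma(k,0)=1$), and the same induction along the path from the root to the parameter, using the monotonicity of $\sigma$, gives the substitution inequality. Your variants --- padding with $4^{-|t|}$ instead of $1/C_{|t|}$, extending by $P_\sigma(s):=P_\sigma(s(a))$, and bounding $\sum_{s\in\cT_n}\lambda(s)$ by passing to trees in $\T_{n+1}$ rather than $\T_n$ --- are inessential differences, and both arguments yield an $O(n)$ bound in condition (iv).
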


\begin{proof}
First, we naturally extend $P_\sigma$ to a context $t\in\cT$ using equations~\eqref{lcsource1} and \eqref{lcsource2}, where we set $\sigma(0,k)=\sigma(k,0)=1$ for all $k\ge 1$
and $P_\sigma(x)=1$.
Note that $|x|=0$. Also note that $P_\sigma$ is not a probability distribution on $\cT_n$.
We have for instance $\sum_{t \in \cT_1} P_\sigma(t) = P_\sigma(f(x,a)) + P_\sigma(f(a,x)) = 2$.
We denote by $C_n$ the $n^\mathrm{th}$ Catalan number. It is well-known that $|\T_n|=C_n$ for all $n\ge 1$. We set $C_0 = 1$ and define $\lambda:\T\cup\cT\to\mathbb{R}_{>0}$ by
\[
\lambda(t)=\max\left\{\frac{1}{C_{|t|}},P_\sigma(t)\right\} .
\]
We show the four points from the strong domination property for the mapping $\lambda$.

The first point of the strong domination property, i.e., $\lambda(t)\ge P_\sigma(t)$ for all $t\in \T$, is obviously true. We now prove the second point,
i.e., $\lambda(f(s,t))\le \lambda(s)\cdot\lambda(t)$ for all $s,t\in\T$.
Assume first that $\lambda(f(s,t))=1/C_{|s|+|t|}$. The inequality $C_{m+k}\ge C_m\cdot C_k$ for all $m,k\ge 0$ yields
\begin{equation} \label{catalancase}
\frac{1}{C_{|s|+|t|}}\le \frac{1}{C_{|s|}}\cdot\frac{1}{C_{|t|}}\le\lambda(s)\cdot\lambda(t). 
\end{equation}
On the other hand, if $\lambda(f(s,t))=P_\sigma(f(s,t))$, then we have
\[
P_\sigma(f(s,t))=\sigma(|s|,|t|)\cdot P_\sigma(s)\cdot P_\sigma(t)\le P_\sigma(s)\cdot P_\sigma(t)\le\lambda(s)\cdot\lambda(t),
\]
since $0\le\sigma(i,j)\le 1$ for all $i,j$.

We now consider the third point, i.e., $\lambda(s(t))\le\lambda(s)\cdot\lambda(t)$ for all $s\in\cT$ and $t\in\T$.
Note first that the case $\lambda(s(t))=1/C_{|s|+|t|}$ follows again from equation~\eqref{catalancase}, since
$|s(t)| = |s|+|t|$. So we assume that $\lambda(s(t))=P_\sigma(s(t))$. 
Let $k$ be the length (measured in the number of edges) from the root of $s$ to 
the unique $x$-labelled node in $s$.

We show $P_\sigma(s(t)) \leq P_\sigma(s) \cdot P_\sigma(t)$ by induction over $k \geq 0$.
If $k=0$ then $s=x$ and we get
$P_\sigma(s(t)) = P_\sigma(t) = P_\sigma(x) \cdot P_\sigma(t) = P_\sigma(s) \cdot P_\sigma(t)$.
Let us now assume that $k \geq 1$. Then, $s$ must have the form $s = f(u, v)$.
Without loss of generality assume that $x$ occurs in $u$; 
the other case is of course symmetric.
Therefore, we have $s(t) = f( u(t), v)$. 
The tree $u(t)$ fulfills the induction hypothesis and therefore $P_\sigma(u(t)) \le P_\sigma(u)\cdot P_\sigma(t)$. 
Moreover, we have  $\sigma(|u(t)|, |v|) = \sigma(|u|+|t|, |v|)
\leq \sigma(|u|, |v|)$.
We get
\begin{eqnarray*}
P_\sigma(s(t)) = P_\sigma(  f( u(t), v) ) &=& P_\sigma(u(t) ) \cdot \sigma(|u(t)|, |v|) \cdot P_\sigma(v) \\
& \leq &
P_\sigma(t)\cdot P_\sigma(u) \cdot \sigma(|u|, |v|)  \cdot P_\sigma(v) \\ &=& P_\sigma(t)\cdot P_\sigma(s) .
\end{eqnarray*}
For the fourth property we show  $\sum_{t\in\T_n\cup\cT_n}\lambda(t)\leq 8n-2$
for all $n\ge 1$. First, we have 
\begin{equation}\label{sumlambdatrees}
\sum_{t\in\T_n}\lambda(t)\le \sum_{t\in\T_n}\left(C_n^{-1}+P_\sigma(t)\right)=2.
\end{equation}
We show that $\sum_{t\in\cT_n}\lambda(t)\leq 8n-4$.
Every tree $t \in \T_n$ has $2n-1$ nodes. 
Let $v$ be a node of $t$. We obtain two contexts
from $t$ and $v$ by replacing in $t$ the subtree $t[v]$ by either
$f(x,t[v])$ or $f(t[v],x)$. Let us denote the resulting contexts by $t_{v,1}$ and $t_{v,2}$.
We have $\lambda(t_{v,1}) = \lambda(t_{v,2}) = \lambda(t)$ for every node $v$ of $t$.
Moreover, for every context $t \in \cT_n$ there exists a tree $t' \in \T_n$ and a node 
$v$ of $t'$ such that $t'_{v,1} = t$ or $t'_{v,2} = t$ (depending on whether $x$ is the 
left or right child of its parent node in $t$).
Hence, we get 
\begin{equation}\label{treevscontext}
\sum_{t\in\cT_n}\lambda(t)\le (4n-2)\cdot\sum_{t\in\T_n}\lambda(t). 
\end{equation}
Together with equation~(\ref{sumlambdatrees}) we have $\sum_{t\in\T_n\cup\cT_n}\lambda(t)\le 8n-2$.
\end{proof}

\begin{example}
An example for a leaf-centric tree source $((\T_i)_{i\ge 1},P_\sigma)$, where $\sigma \in \Sigma_{\text{leaf}}$ satisfies
$\sigma(i,j)\ge\sigma(i,j+1)$ and $\sigma(i,j)\ge\sigma(i+1,j)$
for all $i,j\ge 1$, is the famous {\em binary search tree model}; see \cite{KiefferYS09} for 
an investigation in the context of information theory. It is obtained by setting
$\sigma(i,j) = 1/(i+j)$.
\end{example}

\subsection{Depth-centric binary tree sources} \label{sec-depth-centric2}

Recall the definition of the class of mappings $\Sigma_{\text{depth}}$ by equations
\eqref{eq-sigma-depth}, \eqref{dcsource1}, and \eqref{dcsource2} in 
Section~\ref{sec-depth-centric}, and the corresponding class of depth-centric tree sources.
In this section, we state a condition on the mapping $\sigma \in \Sigma_{\text{depth}}$ that enforces the strong domination
property for the depth-centric tree source $( (\T_{i})_{i \geq 1}, P_\sigma)$. This allows again to apply Corollary~\ref{coro1}.

\begin{theorem}\label{theorem:dcts}
If $\sigma \in \Sigma_{\text{depth}}$ satisfies
$\sigma(i,j)\ge\sigma(i,j+1)$ and $\sigma(i,j)\ge\sigma(i+1,j)$
for all $i,j\ge 0$, then $((\T^i)_{i\ge0},P_\sigma)$ has the strong domination property.
\end{theorem}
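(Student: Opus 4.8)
The plan is to mirror the proof of Theorem~\ref{theorem:lcts} as closely as possible, replacing size-based quantities by depth-based ones. First I would extend $P_\sigma$ to contexts $t\in\cT$ by the same recursion~\eqref{dcsource1}, \eqref{dcsource2}, setting $P_\sigma(x)=1$ and agreeing on a convention for $\sigma(d(s),d(t))$ when one argument is the parameter; the natural choice is to treat $x$ as having depth $0$ but to \emph{not} penalise the edge above it, i.e.\ set the relevant $\sigma$-factor to $1$ whenever the child in question is (or contains) the parameter, exactly as $\sigma(0,k)$ was set to $1$ in the leaf-centric proof. Then I would define $\lambda:\T\cup\cT\to\mathbb{R}_{>0}$ by $\lambda(t)=\max\{1/N_{d(t)},\,P_\sigma(t)\}$, where $N_d$ is a suitable normalising quantity — here the analogue of the Catalan number is $N_d=|\T^{\le d}|$ or simply an explicit doubly-exponential bound on the number of binary trees of depth at most $d$, since by Lemma~\ref{lemma:beta-depth-balanced}-type reasoning a tree of depth $d$ has at most $2^d$ leaves and hence $|\T^d|\le 4^{2^d}$.

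The four conditions are then checked as before. Condition~\eqref{dom1} is immediate. For~\eqref{dom2} and~\eqref{dom3}: in the case $\lambda(f(s,t))=P_\sigma(\cdot)$ one uses $\sigma\le 1$ together with the monotonicity hypothesis $\sigma(i,j)\ge\sigma(i+1,j)$, $\sigma(i,j)\ge\sigma(i,j+1)$, arguing by induction on the distance $k$ from the root of $s$ to the parameter exactly as in Theorem~\ref{theorem:lcts}; the key inequality is $\sigma(d(u(t)),d(v))=\sigma(\max\{d(u),d(t)\}\!+\!{\cdots},d(v))\le\sigma(d(u),d(v))$, which holds since $d(u(t))\ge d(u)$ and $\sigma$ is non-increasing in the first argument. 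In the case $\lambda=1/N_{d(\cdot)}$ one needs a submultiplicativity-type bound $1/N_{d(f(s,t))}\le (1/N_{d(s)})(1/N_{d(t)})$, i.e.\ $N_{\max\{d(s),d(t)\}+1}\ge N_{d(s)}\cdot N_{d(t)}$ and similarly for $s(t)$ (where $d(s(t))=\max\{d(s),d(t)+?\}$ depending on where the parameter sits, but in any case $\ge\max\{d(s),d(t)\}$). With $N_d=4^{2^d}$ this reads $4^{2^{\max\{d_1,d_2\}+1}}\ge 4^{2^{d_1}+2^{d_2}}$, which is true because $2^{m+1}=2^m+2^m\ge 2^{d_1}+2^{d_2}$ when $m=\max\{d_1,d_2\}$.

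For condition~\eqref{dom4} I would bound $\sum_{t\in\T^n\cup\cT^n}\lambda(t)$. The subtlety, and the main obstacle, is that unlike the leaf-centric case the set $\T^n$ is still finite but large, and $\sum_{t\in\T^n}P_\sigma(t)=1$ only gives the $P_\sigma$-part; for the $1/N_n$-part I need $|\T^n|/N_n=|\T^n|/4^{2^n}\le 1$, which holds since $|\T^n|\le 4^{2^n}$. Hence $\sum_{t\in\T^n}\lambda(t)\le 1+1=2$. For contexts I would again relate each context of depth $n$ to a tree by the same "plug a node above some subtree" construction; a context $t\in\cT^n$ arises from a tree $t'$ of depth $n$ (or $n-1$) and a node $v$, with $\lambda$ preserved up to the doubly-exponential factor, and the number of nodes of $t'$ is at most $2\cdot|\T^n|$-ish, giving $\sum_{t\in\cT^n}\lambda(t)\le O(1)\cdot 2^n\cdot\sum_{t\in\T^n}\lambda(t)$ or similar — in any case polynomially (indeed at worst singly-exponentially, but one must be careful) bounded. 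Here I expect the real work: one must verify that the doubly-exponential normaliser $N_n$ is large enough to absorb both $|\T^n|$ \emph{and} the number of (tree, node) pairs producing a given-depth context, so that the whole sum stays bounded by $c_1 n^{c_2}$; if a naive bound only gives something singly-exponential in $n$ I would need to enlarge $N_n$ (e.g.\ to a product $4^{2^n}\cdot 2^{2^n}$) while preserving the submultiplicativity used in conditions~\eqref{dom2}--\eqref{dom3}. Once~\eqref{dom4} is secured, the theorem follows, and together with Corollary~\ref{coro1} it yields $R(E_\psi,\mathcal{S},i)\le O(\log\log i/\log i)$ for these depth-centric sources.
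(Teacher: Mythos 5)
Your treatment of conditions (ii)--(iii) in the $P_\sigma$-case is essentially the paper's: induction on the distance from the root of $s$ to the parameter, using $d(u)\le d(u(t))$ and the monotonicity of $\sigma$. But there is a genuine gap in how you handle condition (iv), and it starts with a misreading of what that condition asks. The strong domination property requires $\sum_{t\in\T_n\cup\cT_n}\lambda(t)\le c_1 n^{c_2}$ where $\T_n,\cT_n$ are the \emph{size} classes (trees and contexts with $n$ leaves) --- this is exactly the quantity $M_j=\sum_{u\in\T_j\cup\cT_j}\lambda(u)$ used in Lemma~\ref{lemma-entropy-bound}, and it is indexed by size no matter whether the source is leaf- or depth-centric. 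You instead bound a sum over the depth classes $\T^n\cup\cT^n$, and your observation $\sum_{t\in\T^n}P_\sigma(t)=1$ is therefore not the required estimate. The paper keeps the size-indexed normalizer, $\lambda(t)=\max\{1/C_{|t|},P_\sigma(t)\}$, so the normalizer case of (ii)--(iii) is literally the Catalan inequality \eqref{catalancase} again; the only genuinely new step for the depth-centric source is the bound $\sum_{t\in\T_n}P_\sigma(t)\le n$, which follows because every $t\in\T_n$ satisfies $\lceil\log_2 n\rceil\le d(t)\le n-1$, so the size class $\T_n$ is covered by at most $n$ of the depth-indexed distributions; contexts are then absorbed via the same $(4n-2)$-to-one tree--context correspondence \eqref{treevscontext} as in Theorem~\ref{theorem:lcts}. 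This step is entirely missing from your proposal, and replacing the Catalan normalizer by the doubly-exponential $N_{d(t)}$ is both unnecessary and problematic: your own submultiplicativity argument for condition (iii) only uses $d(s(t))\ge\max\{d(s),d(t)\}$, which does not give $N_{d(s(t))}\ge N_{d(s)}N_{d(t)}$ (e.g.\ $s=f(x,u)$ with $d(t)\le d(u)$ leaves $d(s(t))=d(s)$).

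A second concrete problem is your convention for extending $P_\sigma$ to contexts, namely setting the $\sigma$-factor to $1$ at every node whose child contains the parameter. The paper does not do this (it keeps all $\sigma$-factors, defines the depth of a context as the depth of $t(a)$, and only sets $P_\sigma(x)=1$), and your convention destroys condition (iv): a caterpillar context whose parameter sits at the bottom of a spine of length $n$ then has extended probability $1$, and there are exponentially many such contexts of any given size (and of any given depth), so $\sum_{t\in\cT_n}\lambda(t)\ge 2^{n-1}$ regardless of the normalizer. With the paper's extension, by contrast, one has $\lambda(t_{v,1}),\lambda(t_{v,2})\le\lambda(t)$ for the two contexts obtained from a tree $t$ and a node $v$ (monotonicity of $\sigma$ makes the inserted node and the deeper ancestors only decrease $P_\sigma$), which is precisely what lets the context sum be charged to the tree sum, yielding the polynomial bound $\sum_{t\in\T_n\cup\cT_n}\lambda(t)\le (4n-1)(n+1)$. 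So the overall plan (mirror Theorem~\ref{theorem:lcts}, use depth-monotonicity for the $P_\sigma$-case) is right, but as written the proposal does not establish condition (iv), which is where the actual content of the depth-centric case lies.
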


\begin{proof}
Recall that the depth of a context $t\in\cT$ is defined as the depth of the tree $t(a)$.
Using this information, we extend $P_\sigma$ to a context $t\in\cT$ using equations~\eqref{dcsource1}
and \eqref{dcsource2}, where we set $P_\sigma(x) = 1$.
Similarly to the proof of Theorem~\ref{theorem:lcts} for leaf-centric tree sources, we define
\[
\lambda(t)=\max\left\{\frac{1}{C_{|t|}},P_\sigma(t)\right\} .
\]
The first point of the strong domination property, i.e., $\lambda(t)\ge P_\sigma(t)$ for all $t\in \T$,
follows directly from the definition of $\lambda$.
Now we prove the second point,
i.e., $\lambda(f(s,t))\le \lambda(s)\cdot\lambda(t)$ for all $s,t\in\T$.
The case $\lambda(f(s,t))=1/C_{|s|+|t|}$ is covered by equation~\eqref{catalancase}.
If otherwise $\lambda(f(s,t))=P_\sigma(f(s,t))$, then
\[
P_\sigma(f(s,t))=\sigma(d(s),d(t))\cdot P_\sigma(s)\cdot P_\sigma(t)\le P_\sigma(s)\cdot P_\sigma(t)\le\lambda(s)\cdot\lambda(t),
\]
since $0\le\sigma(i,j)\le 1$ for all $i,j$.

Consider now the third point, i.e., $\lambda(s(t))\le\lambda(s)\cdot\lambda(t)$ for all $s\in\cT$ and $t\in\T$.
Note that the case $\lambda(s(t))=1/C_{|s|+|t|}$ follows again from equation~\eqref{catalancase}.
Hence, we can assume that $\lambda(s(t))=P_\sigma(s(t))$. 
Again, we prove $P_\sigma(s(t)) \leq P_\sigma(s) \cdot P_\sigma(t)$ by induction over the length $k\ge 0$
(measured in the number of edges) from the root of $s$ to the unique $x$-labelled node in $s$.
If $k=0$ then $s=x$ and $P_\sigma(s)=1$, which gives us
$P_\sigma(s(t)) = P_\sigma(t) = P_\sigma(s) \cdot P_\sigma(t)$.
We now assume $k \geq 1$ and $s = f(u,v)$.
Without loss of generality assume that $x$ occurs in $u$; 
the other case is symmetric.
Therefore, we have $s(t) = f( u(t), v)$. 
We apply the induction hypothesis to the tree $u(t)$, which yields $P_\sigma(u(t)) \le P_\sigma(u)\cdot P_\sigma(t)$. 
Moreover, since $d(u)\le d(u(t))$ we have  $\sigma(d(u(t)), d(v)) \le \sigma(d(u), d(v))$.
It follows that
\begin{eqnarray*}
P_\sigma(s(t)) = P_\sigma(  f( u(t), v) ) &=& P_\sigma(u(t) ) \cdot \sigma(d(u(t)), d(v)) \cdot P_\sigma(v) \\
& \leq &
P_\sigma(t)\cdot P_\sigma(u) \cdot \sigma(d(u), d(v))  \cdot P_\sigma(v) \\ &=& P_\sigma(t)\cdot P_\sigma(s) .
\end{eqnarray*}
For the fourth property we show  $\sum_{t\in\T_n\cup\cT_n}\lambda(t)\leq 4n^2 + 3n - 1$ for all $n\ge 1$.
First, we have 
\begin{equation}\label{sumlambdatrees2}
\sum_{t\in\T_n}\lambda(t)\le \sum_{t\in\T_n}\left(C_n^{-1}+P_\sigma(t)\right)=1+\sum_{t\in\T_n}P_\sigma(t).
\end{equation}
Note that $P_\sigma$ is not a probability distribution on $\T_n$ since this section deals with depth-centric tree sources.
But for each tree $t\in \T_n$ we have $\lceil\log_2(n)\rceil\le d(t)\le n-1$, which yields
\[
\sum_{t\in\T_n}P_\sigma(t)\le \sum_{i=\lceil\log_2(n)\rceil}^{n-1}\sum_{t\in\T^i}P_\sigma(t)=n-\lceil\log_2(n)\rceil\le n.
\]
Together with equation~(\ref{sumlambdatrees2}) we get $\sum_{t\in\T_n}\lambda(t)\le n+1$.
The remaining part $\sum_{t\in\cT_n}\lambda(t)$ can be estimated with help of equation~(\ref{treevscontext}) from the corresponding part in the proof of Theorem~\ref{theorem:lcts}.
In total, we have
\[
\sum_{t\in\T_n\cup\cT_n}\lambda(t)\le (4n-1)\sum_{t\in\T_n}\lambda(t)\le (4n-1)(n+1) = 4n^2 + 3n - 1.
\]
\end{proof}

\section{Future research}

We plan to investigate, whether the strong domination property can be shown also for other classes
of tree sources. An interesting class are the tree sources derived from stochastic context-free grammars
\cite{MillerO92}. Another interesting question is, whether convergence rate of $O(\log \log i / \log i)$ in Corollary~\ref{coro1}
can be improved to $O(1 / \log i)$. In the context of grammar-based string compression, such an improvement has 
been accomplished in \cite{KiYa02}.

\def\cprime{$'$} \def\cprime{$'$}

\end{document}